\def\E{\mathbb{E}}
\def\M{\mathcal{M}}
\newcommand{\costrate}{{\rm cr}}
\newcommand{\cp}{{\rm CP}}
\newcommand{\optim}{{\rm OPT}}
\newcommand{\algor}{{\rm ALG}}
\newcommand{\marker}{{\rm Marker}}
\newcommand{\roe}{{\rm RoE}}
\newcommand{\lru}{{\rm LRU}}
\newcommand{\plfu}{{\rm PLFU}}
\newcommand{\prob}{{\rm Pr}}
\newtheorem{claim}{Claim}
\newtheorem{theorem}[claim]{Theorem}
\newtheorem{lemma}[claim]{Lemma}
\newtheorem{proposition}[claim]{Proposition}
\newtheorem{corollary}[claim]{Corollary}
\date{}
\begin{document}

\title{Modeling Online Paging in Multi-Core Systems}
\author{Mathieu Mari\thanks{University of Montpellier, Montpellier, France. mari.mathieu.06@gmail.com}, Anish Mukherjee\thanks{University of Warwick, Coventry, England.  anish343@gmail.com}, Runtian Ren\thanks{IDEAS NCBR, Warsaw, Poland. renruntian@gmail.com}, Piotr Sankowski\thanks{University of Warsaw, IDEAS NCBR and MIM Solutions, Warsaw, Poland. piotr.sankowski@gmail.com}}
\maketitle

\begin{abstract}
Web requests are growing exponentially since the 90s due to the rapid development of the Internet.
This process was further accelerated by the introduction of cloud services. 
It has been observed statistically that memory or web requests generally follow power-law distribution (Breslau et al., INFOCOM'99). 
That is, the $i^{\text{th}}$ most popular web page is requested with a probability proportional to $1 / i^{\alpha}$ ($\alpha > 0$ is a constant). 
Furthermore, this study, which was performed more than 20 years ago, indicated Zipf-like behavior, i.e., that $\alpha \le 1$. 
Such laws are useful for designing caching algorithms in general and configuring proxy caches, and therefore it is important to understand how universal they are.

We revisit these stochastic models in the case of cache requests in the context of modern cloud systems. 
Surprisingly, the memory access traces coming from petabyte-size systems not only show that $\alpha$ can be bigger than one but also illustrate a shifted power-law distribution --- called Pareto type II or Lomax. 
These previously not reported phenomenon calls for statistical explanation. 
Our first contribution is a new statistical {\it multi-core power-law} model indicating that double-power law can be attributed to the presence of multiple cores running many virtual machines in parallel on such systems. 
We verify experimentally the applicability of this model using the Kolmogorov-Smirnov test (K-S test). 

The second contribution of this paper is a theoretical analysis indicating why LRU and LFU-based algorithms perform well in practice on data satisfying power-law or multi-core assumptions. 
We provide an explanation by studying the online paging problem in the stochastic input model, i.e., the input is a random sequence with each request independently drawn from a page set according to a distribution $\pi$.
To study the performance of LRU and Perfect LFU (PLFU, an ideal version of LFU), we use ratio-of-expectations ($\roe$), by comparing the expected cost of LRU or PLFU with the expected cost of the optimal offline solution OPT, both dealing with a random sequence of requests with sequence length large enough.
We derive formulas (as a function of the page probabilities in $\pi$) to upper bound their ratio-of-expectations, which help in establishing $\roe(\lru) = O(1)$ and $\roe(\plfu) = O(1)$ given the random sequence following power-law and multi-core power-law distributions. 
\end{abstract}

\section{Introduction}
\label{section-intro}
Nowadays, the public search online for news and information, anytime and anywhere.
Thanks to the rapid development of the Internet, billions of users have registered on Google, Facebook, Twitter, YouTube, etc, and thousands of requests from these users are sent to the cloud servers every second.
Optimizing the average response time for user requests, as a consequence, becomes a core issue for running the business of the platform. 
For this purpose, caching plays a central role --- a ``good'' caching replacement policy is essential for improving the efficiency of cloud servers.


It has been observed statistically that, in many contexts, the requests seen by the web proxy caches typically follow Zipf-like distribution \cite{breslau1999web, arlitt2000performance, chesire2001measurement, almeida2001analysis, sripanidkulchai2004analysis, cherkasova2004analysis, atikoglu2012workload, huang2013analysis, yang2020large}. 
The first of these papers defines Zipf-like distribution to be a power-law distribution with exponent smaller than $1$.
More formally speaking, the probability of a request being the $i^{\text{th}}$ most popular web page is proportional to $1 / i^{\alpha}$ (here $\alpha > 0$ is a constant).
One typical characteristic of this power-law distribution is the ``20/80'' rule (or ``10/90'' rule, depending on the value of $\alpha$), which means around 80\% of the requests refer to the 20\% most popular web pages.

In the first part of this paper, we revisit these stochastic models in the case of cache requests in the context of modern cloud systems. 
Surprisingly, the memory access traces coming from these petabyte-size systems not only show that $\alpha$ can be bigger than one but also illustrate a shifted power-law distribution --- called Pareto type II or Lomax. 
This is visualized in Figure \ref{fig:lomax}. 
For comparison please see perfect Pareto type I distributions reported in~\cite{breslau1999web}.

This previously not reported phenomenon calls for a statistical explanation. 
Such an explanation should be consistent with numerous results reported in~\cite{breslau1999web, arlitt2000performance, chesire2001measurement, almeida2001analysis, sripanidkulchai2004analysis, cherkasova2004analysis, atikoglu2012workload, huang2013analysis, yang2020large}. 
Up to our understanding, the cache trace data reported in these papers is not coming from servers supporting virtualization. 
In particular, most of these cache traces are pre-2000 or are coming from early 2000, when server virtualization was not yet widely adopted~\cite{RedHat,10.1145/3365199}. 
On the contrary, the data studied here comes from petabyte size systems supporting massive virtualization, e.g., running in parallel hundreds of virtual machines. 
Our first contribution is a new statistical {\it multi-core power-law} model indicating that the presence of Pareto type II can be attributed to the presence of multiple cores running many virtual machines in parallel on such systems. 
The more virtual machines are present in the system the more the distribution is flattened. 
We verify experimentally the applicability of this model using the Kolmogorov-Smirnov test (K-S test) and observe that it fits much better to the data than simple power-law distribution. 
As argued in~\cite{why-kstest-for-powerlaw} the proper method of analyzing power-law distributed data should involve a goodness-of-fit test, as otherwise can lead to wrong conclusions.

Due to the statistical observations that reveal skewed distributions, an intuitive idea is to use {\em Least-Frequently-Used} (LFU) cache replacement policy (a.k.a., online paging algorithm). Here one keeps the most popular web pages inside the cache, to maximize the request-hit ratio (and hence minimize the average response time).
In fact, many realistic web caching replacement policies are hybrid in adopting both LFU and LRU ({\em Least-Recently-Used}) ideas and they have been verified to perform rather well in practice (i.e., achieving high hit ratio) \cite{karedla1994caching, karakostas2002exploitation, podlipnig2003survey, einziger2017tinylfu}. We note 
that some existing papers, e.g. \cite{podlipnig2003survey}, include experiments on synthetic data sets generated according to Zipf-like distributions~\cite{breslau1999web}. In our opinion, future synthetic data sets should be extended with Pareto type II distributions with exponents higher than 1.

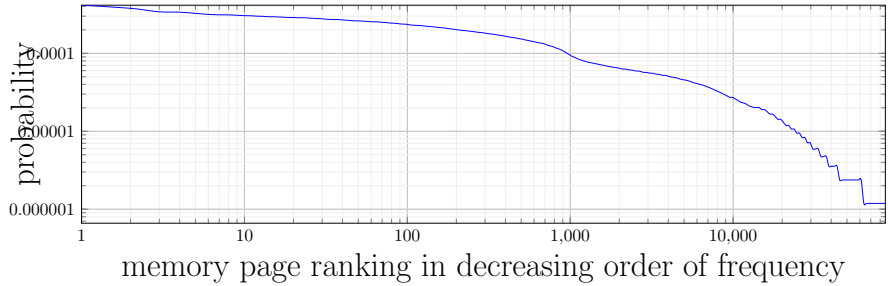
\begin{figure}
    \centering
    \begin{tikzpicture}[scale=0.6] 
\begin{axis}[
    xmode=log,
    log ticks with fixed point,
    ymode=log,
    log ticks with fixed point,
    xmin = 1, xmax = 86011,
    ymin = 0, ymax = 0.000415014,
    grid = both,
    minor tick num = 1,
    major grid style = {lightgray},
    minor grid style = {lightgray!25},
    width = 1.2\textwidth,
    height = 0.4\textwidth,
    xlabel = {memory page ranking in decreasing order of frequency},
    ylabel = {probability},
    label style={font=\huge}]

\addplot[
    smooth,
    thin,
    blue,
] file {trace_data/probabilities.dat};

\end{axis}

\end{tikzpicture}
\caption{Frequency of memory page accesses versus page ranking. A power-law distribution should give a straight line on a doubly logarithmic scale.}
\label{fig:lomax}
\end{figure}

We stress that paging is one of the famous failures of worst-case analysis where it gives misleading or useless advice about which algorithm shall be used in practice -- as stated in the introduction section of the book by Roughgarden~\cite{roughgarden_2021}.
Thus, from a theoretical perspective, a natural question thus appears as follows: {\em is there any theoretical result to support the good performance on LFU or LRU when the sequence of requests follows power-law distribution?}  
In this paper, we for the first time provide an affirmative answer to this question.
\begin{itemize}
    \item[-] We model the problem as online paging in an ideal stochastic model, {\em where requests are generated independently by a given distribution $\pi$.} 
    \item[-] For LFU, we focus on one of its ideal version called {\em Perfect LFU} (abbreviated as PLFU, which counts the frequency over all the web pages and maintain the size-$k$ cache with $k$ most frequent pages). 
    \item[-] Under the stochastic model, we evaluate the performance of LRU and PLFU using {\em ratio-of-expectation} ($\roe$) by comparing the expected cost of PLFU with the expected cost of the offline optimal solution $\optim$ (both dealing with a length-$n$ sequence of requests). 
    \item[-] We derive three different formulas (each as a function of the page probabilities in the given distribution $\pi$) to upper bound $\roe(\plfu)$ and one formula to upper bound $\roe(\lru)$.
    \item[-] Finally, we show that when the distribution $\pi$ is a power-law or multi-core power-law distribution and the number of all pages is at least twice the cache size, our formulas only depend on $\alpha$. In other words, PLFU and LRU both achieve a constant ratio-of-expectation.\footnote{Note that $\alpha > 0$ is a constant to characterize the statistics property of the given sequence of requests, typically irrelevant to the cache size or the population of all pages.}
\end{itemize}
We remark that although it is unlikely that the requests are independent of each other and each follows a fixed distribution in practice, such a ``simple'' stochastic model is still powerful enough to provide theoretical indications explaining why the LFU and LRU-based online algorithms practically perform rather well. 

\paragraph{The paging problem.}
We now formally define the paging problem as follows. Consider a two-level memory system: a small fast memory (the cache) that can store $k$ pages and a large but slow memory (e.g., HDD) that can potentially hold infinitely many pages. A sequence of page requests needs to be processed in such a memory system: each request specifies a page and such a page has to be inside the cache for its processing. When a request arrives, if the requested page is already in the cache, then no cost is incurred; otherwise, a {\em page fault} occurs, i.e., one page in the cache must be evicted to make room for the requested page and some cost is incurred. A paging algorithm decides which page to evict when facing a page fault and the paging problem seeks an optimal algorithm to minimize the total cost incurred while processing a sequence of requests. 

\paragraph{Stochastic input model.}
We study the online paging problem in the stochastic input model, where each request (in the random sequence) is independently drawn from a set of pages $\M$ according to a distribution $\pi: \M \to (0, 1)$.
To evaluate the performance of an online paging algorithm $\algor$ we use {\em ratio-of-expectation} ($\roe$), by comparing the expected cost of $\algor$ with the expected cost of the offline optimal solution $\optim$.
To elaborate, we need the following definitions and notations. 

Given any sequence of requests $\sigma$, for any paging algorithm $\algor$, let $\algor(\sigma)$ denote the cost of $\algor$.
In the case that $\sigma$ is a length-$n$ random sequence with each request generated according to $\pi$, let $\E_{\sigma}^n[\algor(\sigma)]$ denote the expected cost of $\algor$ to deal with this sequence $\sigma$.\footnote{If $\algor$ is a randomized algorithm (which flips a coin to maintain the cache configuration), let $r$ denote the random coins used, and the expected cost of $\algor$ becomes $\E_{\sigma}^n\big[\mathbb{E}_r[\algor(\sigma, r)]\big]$.}
W.l.o.g., we assume that the $i^{\text{th}}$ request in the given sequence arrives at time $i$. 
In the online situation, an online paging algorithm needs to process request $t$ at time $t$, before the arrival of the next request.
The ratio-of-expectation of a paging algorithm $\algor$ is defined as
\begin{equation*}
\roe(\algor) = \overline{\displaystyle\lim_{n \to \infty}} \frac{\E_{\sigma}^n[\algor(\sigma)]}{\E_{\sigma}^n[\optim(\sigma)]}.
\end{equation*}
Here, $\optim(\sigma)$ denotes the cost of the offline optimal solution $\optim$ for $\sigma$.
We remark that the ratio-of-expectation criteria have been widely used to evaluate the performance of an online algorithm for Steiner forest \cite{garg2008stochastic}, perfect matching \cite{gupta2019stochastic, mathieu2023online}, $k$-Server \cite{dehghani2017stochastic}, set cover \cite{grandoni2008set} in the stochastic input model. We thus believe that it is also suited for evaluating the performance of online paging algorithms. 

\paragraph{Least-Frequently-Used.}
Now we elaborate on the LFU method by introducing two versions of LFU-based paging algorithms: {\em Perfect LFU} and {\em In-Cache LFU}. 
\begin{itemize}
    \item[-] Perfect LFU (PLFU) counts the page frequency over all the pages and maintains the size-$k$ cache with $k$ most frequent pages. At a page fault, the $k^{\text{th}}$ most popular page is evicted to make room for the incoming requested page; right after processing the request, the $k^{\text{th}}$ most popular page is added back inside the cache immediately -- a cost of 2 is thus incurred for each page fault.  
    \item[-] In-Cache LFU only counts the pages inside the cache at any moment. At a page fault, the page with the least frequency is evicted to make room for the incoming requested page. 
\end{itemize}
Note that an online paging algorithm is called {\em lazy} if it only evicts one page to make room for the incoming requested page at a page fault. 
Many classic paging algorithms belong to this class and have been verified to perform well, since requests typically follow {\em locality of reference} in practice, i.e., once a page is requested, it is likely that this page will be soon requested again during the following requests. 
By definition, In-Cache LFU is lazy and PLFU is not lazy. 
However, PLFU can be reduced to In-Cache LFU without paying an extra cost.\footnote{The reduction from PLFU to In-Cache LFU is trivial by not adding back the $k^{\text{th}}$ most frequent page back inside the cache right after PLFU dealing with a page fault. 
To explain why no extra cost is incurred through this reduction, given any sequence of requests, consider when requesting a $j^{\text{th}}$ most frequent page ($j > k$): by applying PLFU, a cost of 2 is incurred for processing this page $j$; however, by applying In-Cache LFU, at most a cost of 1 is incurred to process this request and also at most a cost of 1 is incurred for the next time when one of the $k$ most frequent pages is requested.}

Another salient feature of PLFU is that, up to a factor of 2, PLFU is the optimal online paging algorithm in the stochastic input model.
This is because, PLFU produces a {\em cost rate} (per request) of $2\sum_{j > k} p_j$ and no online paging algorithm can produce a cost rate less than $\sum_{j > k} p_j$ due to the cache size constraint ($p_j$ is the probability of the $j^{\text{th}}$ most frequent page in $\pi$; for a detailed proof, see Proposition \ref{proposition-costrate-plfu}). 

\paragraph{Least-Recently-Used and marking algorithms.}
Now we elaborate LRU as an online paging algorithm belonging to a class called {\em marking}, which processes the sequence of requests in {\em phases} defined as follows. 
\begin{itemize}
    \item[-] At the beginning of each phase, all the pages inside the cache (if exists) are unmarked; the first phase starts from the first request in the given sequence. 
    \item[-] When a page is requested (in a phase), it is marked immediately.
    \item[-] At a page fault, an unmarked page in the cache is evicted to make room for the requested page; in the case that all $k$ pages in the cache are marked, then the current phase terminates, all the pages become unmarked, and the next phase starts. 
\end{itemize}
By definition, LRU is a marking algorithm: at a page fault in any phase, LRU evicts the unmarked page which is least recently requested.

One famous marking algorithm is called {\em Flush-When-Full} (FWF), which evicts all the $k$ pages from the cache when a phase terminates.
Although FWF is not realistic, studying its competitiveness provides an in-depth understanding of the class of marking algorithms. 
Besides, one another {\em randomized} marking algorithm is called {\em Marker} (Fiat et al. \cite{fiat1991competitive}), which works as follows: at a page fault in any given phase, among all the unmarked pages, one of them is evicted uniformly at random. 
Thanks to the random eviction at page fault, Marker achieves an O($\log k$) competitive ratio, while LRU and FWF are $k$-competitive.

Here we remark that the classic paging algorithm {\em First-In-First-Out} (FIFO), which evicts the page which arrives earliest (among the $k$ pages inside the cache), is not a marking algorithm by the above definition.

\paragraph{Related works.}
Extensive papers have studied the online paging problem and here we only introduce the core results from the theoretical perspective.

In the offline setting (i.e., the sequence of requests is known before processing), the optimal algorithm $\optim$ (known as Belady's rule) works as follows: at a page fault, the page to be requested farthest away in the future is evicted from the cache, to make room for the currently requested page \cite{belady1966study}. However, paging is an inherent {\em online} problem: each request has to be processed immediately at its arrival, without knowing the requests arriving in the future. Note that Belady's rule cannot be used in the online setting and as a result, many heuristic online paging algorithms have been proposed, for instance, LRU, FIFO, FWF, Marker, etc.

In the online adversarial model (i.e., each request can be arbitrarily decided by an online adversary), LRU achieves a competitive ratio of $k$ and no deterministic online algorithm can achieve a competitive ratio less than $k$ \cite{sleator1985amortized, karlin1988competitive}.
Torng \cite{torng1998unified} showed that any deterministic marking algorithm is $\Theta(k)$-competitive. 
Later, Fiat et al. \cite{fiat1991competitive} proposed a randomized marking algorithm called {\em Marker} and showed that it achieves a competitive ratio of $2H_k = O(\log k)$ ($H_k$ denotes the $k^{\text{th}}$ Harmonic number).
Besides, they also show that no online algorithm can achieve a competitive ratio better than $\Theta(\log k)$. 

In fact, although LRU is $k$-competitive, it actually performs well in practice (usually with a performance ratio of no more than 2) and hence is referred to as the golden rule. 
To provide an explanation for such a phenomenon, some works investigated the characterization of the online adversary such that $\lru$ achieves a good performance ratio \cite{panagiotou2006adequate, angelopoulos2007separation}, while some other works proposed new models to evaluate an online paging algorithm's performance, for example, loose competitiveness \cite{young1994k}, paging with locality of reference \cite{borodin1995competitive, irani1996strongly, fiat1997truly, chrobak1999lru, albers2005paging, becchetti2004modeling}, etc.

Some previous works have studied online paging problems in the stochastic input model \cite{franaszek1974some, fagin1977asymptotic, koutsoupias2000beyond, young2000line}, for the purpose of explaining why LRU works well in practice.
Franaszek and Wagner \cite{franaszek1974some} studied the performance of a class of algorithms called $H_1$ (including classic algorithms like LRU, FIFO, etc) by comparing the cost rate of $\algor$ (denoted by $Q_{\algor}$ in their work) with $Q_0 = \sum_{j > k} p_i$. 
They showed that $\frac{Q_{\algor}}{Q_0} = \Omega(\log k)$ for any algorithm $\algor \in H_1$ and particularly for LRU, 
$$\frac{Q_{\lru}}{Q_0} \le 1 + \frac{k \cdot \sum_{j \le k} p_i}{1 + (k - 1) \cdot \sum_{j > k} p_i}.$$
However, no result in their work compares $Q_{\lru}$ or $Q_0$ with $Q_{\optim}$ (the cost rate of the offline optimal solution).
Koutsoupias and Papadimitriou \cite{koutsoupias2000beyond} analyzed $\lru$'s performance on the random sequence generated by a {\em diffuse} adversary - each request being any particular page with conditional probability no more than a fixed $\varepsilon$. 
They showed that among all deterministic online paging algorithms, $\lru$ achieves the optimal competitive ratio against the diffuse adversary.
Following \cite{koutsoupias2000beyond}, Young \cite{young2000line} also studied online paging against a diffuse adversary.
He established accurate formulas (as a function of $k$ and $\varepsilon$) to upper and lower bound the performance ratios of the deterministic online lazy algorithm (like LRU, FIFO, etc) and randomized online algorithm (like Marking \cite{fiat1991competitive}, Partition \cite{mcgeoch1991strongly}, Equitable \cite{achlioptas2000competitive}, etc).
We note, however, that for skewed distributions, i.e., power-law, we have typically $\varepsilon > \frac{1}{k}$. 
In such case \cite{young2000line} implies $\Theta(\log k)$ bounds. 
Laoutaris \cite{laoutaris2007closed} studied the performance of LRU when the input follows power-law distribution and derived an analytic formula on LRU's hit ratio, yet without comparing the hit ratios of LRU and PLFU or OPT. 
Although some previous works have considered the online paging problem in the stochastic input model, nonetheless, several questions still remain open:
\begin{itemize}
    \item[-]  What is the performance ratio of LFU and LRU compared with the offline optimal solution OPT? Is there any formula (as a function of the page probabilities in the given distribution $\pi$) to upper bound their performance ratios?
    \item[-] Is there any result to prove that LFU and LRU achieve a ``good'' performance ratio (say $O(1)$) when the input follows some practical distribution like power-law?
\end{itemize}
In this paper, our results provide an affirmative answer to these questions. 

Some previous work has studied multi-core paging problem \cite{hassidim2010cache, lopez2012paging}.
L{\'o}pez-Ortiz and Salinger \cite{lopez2012paging} showed that any online algorithm achieves a competitive ratio of $\Omega(n)$ ($n$ denoting the number of requests in the sequence) for the multi-core paging problem in the adversarial model. 
They further proved that the offline version problem is APX-hard. 
In recent years, another line of works given by Agrawal et al. \cite{agrawal2020green, agrawal2021tight, agrawal2022online} considers the parallel paging problem in a multi-core environment in the online adversarial model, i.e., with parallel processors available for processing page requests. 
In this paper, we considered the multi-core paging problem in the stochastic input model and showed that both LRU and PLFU achieves O(1) ratio-of-expectations when the sequence follow power-law distribution. 

\paragraph{Organization.}
In Section \ref{section-experiment}, we present our experimental result to show that the web traces in the modern cloud systems fit into a multi-core power-law distribution model. 
In Section \ref{section-theory}, \ref{section-power-law} and \ref{section-multi-core}, we provide a theoretical explanation of the good performance of LFU or LRU-based algorithms under the power-law assumption.
We derive four formulas to upper bound the ratio-of-expectations of PLFU and LRU (see Theorem \ref{theorem-main-1}, \ref{theorem-main-2}, \ref{theorem-main-3} and \ref{theorem-main-4} in Section \ref{section-theory}), which help in proving that both PLFU and LRU achieve constant ratio-of-expectation when the random sequence follows power-law distribution or multi-core power-law distribution (see Section \ref{section-power-law} and Section \ref{section-multi-core}). 
Finally, in Section \ref{section-conclusion}, we conclude with some lower bounds on the ratio-of-expectation in the stochastic input model and some future directions.

\section{Experimental results}
\label{section-experiment}
In this section, we report the experiments on traces provided by Huawei coming from their petabyte size systems. 
Dorado systems typically have 64-256 PB of SSD storage and 16-64 TB of smart cache. 
The traces record correspond to different usage scenarios in which hundreds ($\ge 500$) virtual machines are used.
The user requests (in each trace file) are served in a multi-core environment. 
That is, at any time, $\kappa$ different virtual machines are working concurrently and independently (thus generating $\kappa$ intermixed sequence of requests) but they are sharing the same cache pool of size $k$.
When a particular page $i$ is requested (in any sequence served by any core) but not inside the cache pool, one page from the cache has to be evicted to make room for page $i$. 
Typically, the number of cores $\kappa$ is smaller than the cache size $k$. 

In order to be consistent with previous results (e.g. \cite{breslau1999web}), we need to propose a model that reconstructs Zipf-like distributions when $\kappa=1$. Thus we assume that in the considered multi-core environment, the requests in each of $\kappa$ sequences follow a power-law distribution $\pi$ with parameter $\alpha$.  
This implies that the probability $\tilde{p}_i$ that a page $i$ is requested (in at least one of the $\kappa$ sequences) at any time is equal to 
\begin{equation*}\tilde{p}_i =C \cdot ( 1 - (1 - p_i)^\kappa) \text{\  with\  } p_i = \frac{1}{L(\alpha, m)} \cdot \frac{1}{i^\alpha},\end{equation*}
where $C$ is a normalization constant. 

Since the heavy-tailed distribution of cascade size had been experimentally observed, there have been many attempts to fit models to the data. 
Unfortunately, many of them have not used adequate metrics to properly measure the fit of the distribution. 
The rudimentary characterization of power-law distribution is that large events do happen quite often (contrary to exponential distribution). 
To apply those characteristics, it is tempting to propose a metric that would somehow punish errors on the tail of the distribution. 
Another naive idea is to assume that power-law distribution is linear on a log-log plot and use linear regression, e.g., as it seems to be the case in~\cite{breslau1999web}. 
Unfortunately, as claimed in~\cite{why-kstest-for-powerlaw}, both of those methods have serious problems with variation, and virtually any distribution can be accepted by those metrics. 
The proper method of analyzing power-law distributed data should involve a goodness-of-fit test~\cite{why-kstest-for-powerlaw}. 
The most commonly used is the Kolmogorov-Smirnov test (K-S test),
\begin{equation*}
\sup_{x} | X(x) - Y(x)|
\end{equation*}
which computes the maximum difference between CDFs of real and predicted data distribution ($X(x)$ is CDF of predicted data and $Y(x)$ CDF of real data).
Applying this metric allows us to avoid serious errors. 
For example, Bild et al. \cite{ks-twitter} showed using outlined methods that the lifetime of tweets does not follow power-law distribution but in fact, it is type-II discrete Weibull distribution.
The application of such conservative metrics can dispel any doubts about which model fits better to the data. 

According to our experiments, the page distribution fits the above model closely. 
In Figure \ref{fig:traces_huawei}, eight examples are given to show that the page probabilities indeed fit in the above model: the pages are sorted in the increasing order of their frequency/probabilities in the real trace; the black line representing the accumulated page probabilities (CDF) in the real trace and the blue line representing the accumulated probabilities of $\tilde{p}_i$). 
The parameters obtained by fitting the power-law model (i.e., $\kappa=1$) are presented in Table~\ref{tab:fits}, whereas parameters of multi-core power-law are given in Table~\ref{tab:fits-multi-core}. 
As intuitively shown by the figures as well as by the numerical values of the K-S test, the multi-core model fits the data much better. 
The only exception is Trace 1, where both models fit equally well. 
We suspect that in this case only a limited number of virtual machines were running.

The final aspect worth noting is the power-law exponent, which for the simple power-law model is always smaller than $1$ and ranges between $0.669-0.8$. 
This could be seen as confirming~\cite{breslau1999web}. 
However, this is only illusory and is an artifact of using a model that fits only partially to the data -- confirm Figure~\ref{fig:lomax} to see that a straight line can only fit to part of the distribution. When a better model is used the resulting alpha is bigger than $1$ and can reach almost $1.5$.  

\begin{table}[]
    \centering
    \begin{tabular}{c|c|c|c}
       No.  & Num. requests & $\alpha$ & K-S test \\ \hline
       Trace 1 & 279820 & 0.669 & 0,039\\
       Trace 2 & 1686689 & 0.734 & 0.121\\
       Trace 3 & 1650469 & 0.736 & 0.121\\
       Trace 4 & 422968 & 0.8 & 0.121 \\
       Trace 5 & 331162 & 0.791 & 0.145 \\
       Trace 6 & 587317 & 0.588& 0.088\\
       Trace 7 & 618577 & 0.587 & 0.084 \\
       Trace 8 & 655885 & 0.582 & 0.082 \\
    \end{tabular}
    \caption{Parameters obtained by fitting power-law model to collected traces.}
    \label{tab:fits}
\end{table}

\begin{table}[]
    \centering
    \begin{tabular}{c|c|c|c}
     No.  & $\alpha$  &$\kappa$ & K-S test\\ \hline
       Trace 1 & 0.666 & 1 & 0.039\\
       Trace 2 & 1.121 & 406.3 & 0.053\\
       Trace 3 & 1.246 & 1782 & 0.039\\
       Trace 4 & 1.448 & 400 & 0.038\\
       Trace 5 & 1.349 & 487.7 & 0.047 \\
       Trace 6 & 1.416 & 25044 & 0.036\\
       Trace 7 & 1.112 & 1019.7& 0.028 \\
       Trace 8 & 1.137 & 1430.4 & 0.022\\
    \end{tabular}
    \caption{Parameters obtained by fitting multi-core power-law model to collected traces.}
    \label{tab:fits-multi-core}
\end{table}

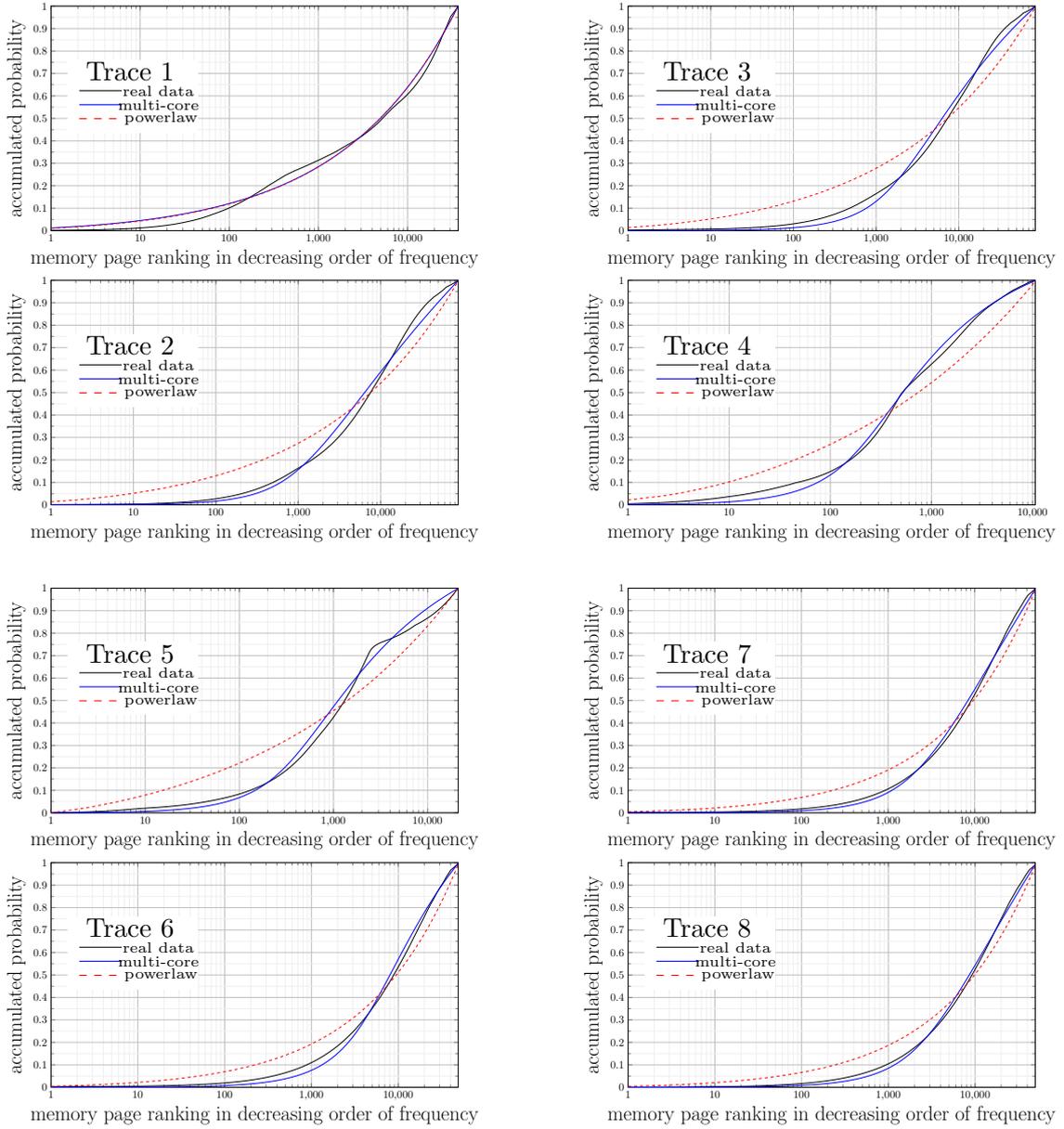
\begin{figure*}
\centering
\begin{multicols}{2}
\begin{tikzpicture}[scale=0.4] 
\begin{axis}[
    xmode=log,
    log ticks with fixed point,
    xmin = 1, xmax = 37000,
    ymin = 0, ymax = 1,
    ytick distance = 0.1,
    grid = both,
    minor tick num = 1,
    major grid style = {lightgray},
    minor grid style = {lightgray!25},
    width = \textwidth,
    height = 0.6\textwidth,
    xlabel = {memory page ranking in decreasing order of frequency},
    ylabel = {accumulated probability},
    label style={font=\huge}]

\addplot[
    smooth,
    thin,
    black,
] file {trace_data/1_data.dat};

\addplot[
    smooth,
    thin,
    blue,
] file {trace_data/1_multicore.dat};

\addplot[
    smooth,
    thin,
    red,
    dashed,
] file {trace_data/1_powerlaw.dat};
 
\end{axis}

\fill[white] (0.6,3.5) rectangle (5.2,6.1);
\node at (2.8,5.7) {Trace 1};
\draw[black] (1,5) -- (2.5,5); 
\node at (3.8,5) {\tiny real data};
\draw[blue] (1,4.5) -- (2.5,4.5);
\node at (3.8,4.5) {\tiny multi-core};
\draw[red,dashed] (1,4) -- (2.5,4);
\node at (3.9,4) {\tiny powerlaw};

\end{tikzpicture}


\begin{tikzpicture}[scale=0.4]
\begin{axis}[
    xmode=log,
    log ticks with fixed point,
    xmin = 1, xmax = 87294,
    ymin = 0, ymax = 1,
    ytick distance = 0.1,
    grid = both,
    minor tick num = 1,
    major grid style = {lightgray},
    minor grid style = {lightgray!25},
    width = \textwidth,
    height = 0.6\textwidth,
    xlabel = {memory page ranking in decreasing order of frequency},
    ylabel = {accumulated probability},
    label style={font=\huge}]

\addplot[
    smooth,
    thin,
    black,
] file {trace_data/2_data.dat};

\addplot[
    smooth,
    thin,
    blue,
] file {trace_data/2_multicore.dat};

\addplot[
    smooth,
    thin,
    red,
    dashed,
] file {trace_data/2_powerlaw.dat};
 
\end{axis} 

\fill[white] (0.6,3.5) rectangle (5.2,6.1);
\node at (2.8,5.7) {Trace 2};
\draw[black] (1,5) -- (2.5,5); 
\node at (3.8,5) {\tiny real data};
\draw[blue] (1,4.5) -- (2.5,4.5);
\node at (3.8,4.5) {\tiny multi-core};
\draw[red,dashed] (1,4) -- (2.5,4);
\node at (3.9,4) {\tiny powerlaw};
\end{tikzpicture}


\begin{tikzpicture}[scale=0.4] 
\begin{axis}[
    xmode=log,
    log ticks with fixed point,
    xmin = 1, xmax = 84203,
    ymin = 0, ymax = 1,
    ytick distance = 0.1,
    grid = both,
    minor tick num = 1,
    major grid style = {lightgray},
    minor grid style = {lightgray!25},
    width = \textwidth,
    height = 0.6\textwidth,
    xlabel = {memory page ranking in decreasing order of frequency},
    ylabel = {accumulated probability},
    label style={font=\huge}]

\addplot[
    smooth,
    thin,
    black,
] file {trace_data/3_data.dat};

\addplot[
    smooth,
    thin,
    blue,
] file {trace_data/3_multicore.dat};

\addplot[
    smooth,
    thin,
    red,
    dashed,
] file {trace_data/3_powerlaw.dat};
 
\end{axis} 

\fill[white] (0.6,3.5) rectangle (5.2,6.1);
\node at (2.8,5.7) {Trace 3};
\draw[black] (1,5) -- (2.5,5); 
\node at (3.8,5) {\tiny real data};
\draw[blue] (1,4.5) -- (2.5,4.5);
\node at (3.8,4.5) {\tiny multi-core};
\draw[red,dashed] (1,4) -- (2.5,4);
\node at (3.9,4) {\tiny powerlaw};
\end{tikzpicture}


\begin{tikzpicture}[scale=0.4] 
\begin{axis}[
    xmode=log,
    log ticks with fixed point,
    xmin = 1, xmax = 10626,
    ymin = 0, ymax = 1,
    ytick distance = 0.1,
    grid = both,
    minor tick num = 1,
    major grid style = {lightgray},
    minor grid style = {lightgray!25},
    width = \textwidth,
    height = 0.6\textwidth,
    xlabel = {memory page ranking in decreasing order of frequency},
    ylabel = {accumulated probability},
    label style={font=\huge}]

\addplot[
    smooth,
    thin,
    black,
] file {trace_data/4_data.dat};

\addplot[
    smooth,
    thin,
    blue,
] file {trace_data/4_multicore.dat};

\addplot[
    smooth,
    thin,
    red,
    dashed,
] file {trace_data/4_powerlaw.dat};
 
\end{axis} 

\fill[white] (0.6,3.5) rectangle (5.2,6.1);
\node at (2.8,5.7) {Trace 4};
\draw[black] (1,5) -- (2.5,5); 
\node at (3.8,5) {\tiny real data};
\draw[blue] (1,4.5) -- (2.5,4.5);
\node at (3.8,4.5) {\tiny multi-core};
\draw[red,dashed] (1,4) -- (2.5,4);
\node at (3.9,4) {\tiny powerlaw};
\end{tikzpicture}
\end{multicols}
\begin{multicols}{2}

\begin{tikzpicture}[scale=0.4] 
\begin{axis}[
    xmode=log,
    log ticks with fixed point,
    xmin = 1, xmax = 21090,
    ymin = 0, ymax = 1,
    ytick distance = 0.1,
    grid = both,
    minor tick num = 1,
    major grid style = {lightgray},
    minor grid style = {lightgray!25},
    width = \textwidth,
    height = 0.6\textwidth,
    xlabel = {memory page ranking in decreasing order of frequency},
    ylabel = {accumulated probability},
    label style={font=\huge}]

\addplot[
    smooth,
    thin,
    black,
] file {trace_data/5_data.dat};

\addplot[
    smooth,
    thin,
    blue,
] file {trace_data/5_multicore.dat};

\addplot[
    smooth,
    thin,
    red,
    dashed,
] file {trace_data/5_powerlaw.dat};
 
\end{axis} 

\fill[white] (0.6,3.5) rectangle (5.2,6.1);
\node at (2.8,5.7) {Trace 5};
\draw[black] (1,5) -- (2.5,5); 
\node at (3.8,5) {\tiny real data};
\draw[blue] (1,4.5) -- (2.5,4.5);
\node at (3.8,4.5) {\tiny multi-core};
\draw[red,dashed] (1,4) -- (2.5,4);
\node at (3.9,4) {\tiny powerlaw};
\end{tikzpicture}


\begin{tikzpicture}[scale=0.4] 
\begin{axis}[
    xmode=log,
    log ticks with fixed point,
    xmin = 1, xmax = 49020,
    ymin = 0, ymax = 1,
    ytick distance = 0.1,
    grid = both,
    minor tick num = 1,
    major grid style = {lightgray},
    minor grid style = {lightgray!25},
    width = \textwidth,
    height = 0.6\textwidth,
    xlabel = {memory page ranking in decreasing order of frequency},
    ylabel = {accumulated probability},
    label style={font=\huge}]

\addplot[
    smooth,
    thin,
    black,
] file {trace_data/6_data.dat};

\addplot[
    smooth,
    thin,
    blue,
] file {trace_data/6_multicore.dat};

\addplot[
    smooth,
    thin,
    red,
    dashed,
] file {trace_data/6_powerlaw.dat};
 
\end{axis} 

\fill[white] (0.6,3.5) rectangle (5.2,6.1);
\node at (2.8,5.7) {Trace 6};
\draw[black] (1,5) -- (2.5,5); 
\node at (3.8,5) {\tiny real data};
\draw[blue] (1,4.5) -- (2.5,4.5);
\node at (3.8,4.5) {\tiny multi-core};
\draw[red,dashed] (1,4) -- (2.5,4);
\node at (3.9,4) {\tiny powerlaw};
\end{tikzpicture}


\begin{tikzpicture}[scale=0.4] 
\begin{axis}[
    xmode=log,
    log ticks with fixed point,
    xmin = 1, xmax = 49954,
    ymin = 0, ymax = 1,
    ytick distance = 0.1,
    grid = both,
    minor tick num = 1,
    major grid style = {lightgray},
    minor grid style = {lightgray!25},
    width = \textwidth,
    height = 0.6\textwidth,
    xlabel = {memory page ranking in decreasing order of frequency},
    ylabel = {accumulated probability},
    label style={font=\huge}]

\addplot[
    smooth,
    thin,
    black,
] file {trace_data/7_data.dat};

\addplot[
    smooth,
    thin,
    blue,
] file {trace_data/7_multicore.dat};

\addplot[
    smooth,
    thin,
    red,
    dashed,
] file {trace_data/7_powerlaw.dat};
 
\end{axis} 

\fill[white] (0.6,3.5) rectangle (5.2,6.1);
\node at (2.8,5.7) {Trace 7};
\draw[black] (1,5) -- (2.5,5); 
\node at (3.8,5) {\tiny real data};
\draw[blue] (1,4.5) -- (2.5,4.5);
\node at (3.8,4.5) {\tiny multi-core};
\draw[red,dashed] (1,4) -- (2.5,4);
\node at (3.9,4) {\tiny powerlaw};
\end{tikzpicture}


\begin{tikzpicture}[scale=0.4] 
\begin{axis}[
    xmode=log,
    log ticks with fixed point,
    xmin = 1, xmax = 49376,
    ymin = 0, ymax = 1,
    ytick distance = 0.1,
    grid = both,
    minor tick num = 1,
    major grid style = {lightgray},
    minor grid style = {lightgray!25},
    width = \textwidth,
    height = 0.6\textwidth,
    xlabel = {memory page ranking in decreasing order of frequency},
    ylabel = {accumulated probability},
    label style={font=\huge}]

\addplot[
    smooth,
    thin,
    black,
] file {trace_data/9_data.dat};

\addplot[
    smooth,
    thin,
    blue,
] file {trace_data/9_multicore.dat};

\addplot[
    smooth,
    thin,
    red,
    dashed,
] file {trace_data/9_powerlaw.dat};
 
\end{axis} 

\fill[white] (0.6,3.5) rectangle (5.2,6.1);
\node at (2.8,5.7) {Trace 8};
\draw[black] (1,5) -- (2.5,5); 
\node at (3.8,5) {\tiny real data};
\draw[blue] (1,4.5) -- (2.5,4.5);
\node at (3.8,4.5) {\tiny multi-core};
\draw[red,dashed] (1,4) -- (2.5,4);
\node at (3.9,4) {\tiny powerlaw};
\end{tikzpicture}
\end{multicols}

\caption{Fitting pages probabilities from real traces (black) using $\tilde{p}_i$ (blue) and power-law (red). }
\label{fig:traces_huawei}
\end{figure*}

\section{Theoretical results}
\label{section-theory}
In this section, we study the online paging problem in the stochastic input model (where each request is drawn independently from a page set $\M$ according to a distribution $\pi$).
We focus on deriving upper bounds on the ratio-of-expectations (RoE) of PLFU and LRU, using functions of the page probabilities of $\pi$.
We derive two formulas to upper bound $\roe(\plfu)$ and $\roe(\lru)$.
Specifically, given the distribution $\pi$, by sorting all the page probabilities in the decreasing order, i.e., $p_1 \ge p_2 \ge \dots \ge p_m$ (recalling that $m$ and $k$ denote the number of pages and the cache size, respectively) and denoting by
\begin{equation*}
    p[x:y] := \sum_{i = x}^y p_i \text{ for } 1 \le x < y \le m,
\end{equation*}
we obtain the following results. 
\begin{theorem}
\label{theorem-main-3} 
\begin{equation*}
    \roe(\lru) \le \frac{16}{\min\Big\{\frac{p[\frac{3k}{8}:\frac{k}{2}] + p[\frac{3k}{2}:m] }{p[\frac{3k}{8}:m]}, \frac{p[\frac{11k}{8}:m]}{p[\frac{k}{2}:m]}\Big\}}.
\end{equation*}
\end{theorem}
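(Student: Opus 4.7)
The plan is to combine an upper bound on $\E[\lru(\sigma)]$ coming from the marking-phase structure of LRU with the cache-capacity lower bound on $\E[\optim(\sigma)]$ underlying Proposition~\ref{proposition-costrate-plfu}. Partition $\sigma$ into LRU phases $\Phi_1,\Phi_2,\dots$ and let $S_i$ denote the set of $k$ distinct pages requested in $\Phi_i$; since all pages of $S_i$ end up marked, the LRU cache at the start of $\Phi_{i+1}$ equals $S_i$, and so LRU incurs exactly $|S_{i+1}\setminus S_i|\le k$ faults during $\Phi_{i+1}$. Letting $T$ denote the stationary phase length, a Wald-type identity gives $\E[\lru(\sigma)]/n \le k/\E[T]$ in the limit $n\to\infty$, while the cache-capacity argument underlying Proposition~\ref{proposition-costrate-plfu} (every algorithm has fault probability at least $1-p[1:k]=p[k+1:m]$ per request) yields $\E[\optim(\sigma)]/n\ge p[k+1:m]$.

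The core quantitative step is to lower-bound $\E[T]$. Since a phase contains exactly $k$ distinct pages, $k = \E[|S|] = \sum_j \E[1-(1-p_j)^T]$; together with Bernoulli's inequality $1-(1-p_j)^T\le \min(Tp_j,1)$ and Jensen's inequality applied to the concave map $\min(\cdot,1)$, this yields
\begin{equation*}
k \;\le\; \sum_{j=1}^{m} \min\bigl(\E[T]\,p_j,\,1\bigr).
\end{equation*}
Splitting the sum at any rank $r$ gives $k\le r+\E[T]\cdot p[r+1:m]$, i.e.\ $\E[T]\ge (k-r)/p[r+1:m]$. Choosing $r=k/2$ and using $p[k/2+1:m]\le p[k/2:m]$ yields $\E[T]\ge (k/2)/p[k/2:m]$, and hence
\begin{equation*}
\roe(\lru) \;\le\; \frac{k/\E[T]}{p[k+1:m]} \;\le\; \frac{2\,p[k/2:m]}{p[k+1:m]} \;\le\; \frac{16\,p[k/2:m]}{p[11k/8:m]},
\end{equation*}
where the last inequality uses $p[11k/8:m]\le p[k+1:m]$ (the factor $16$ leaves room for slack in Wald, the convexity estimate, and boundary terms). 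The final quantity is exactly $16/B$ where $B$ denotes the second argument of the $\min$ in the theorem, and since $16/\min(A,B)\ge 16/B$, the theorem already follows.

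An alternative derivation that naturally produces the $A$-branch expression $(p[3k/8:k/2]+p[3k/2:m])/p[3k/8:m]$ would refine the pessimistic ``$k$ faults per phase'' bound by estimating $\E|S_{i+1}\setminus S_i|$ directly via $\Pr[j\in S_{i+1}\setminus S_i]\approx \min(\E[T]p_j,1)(1-\min(\E[T]p_j,1))$ after a near-independence argument between consecutive phases, and then bucketing the contributions at the thresholds $3k/8$ and $3k/2$. The main obstacle throughout is legitimizing the ``stationary phase length $T$'' step: one must handle the dependence of consecutive phases (they share heavy hitters with high probability) via a second-moment or ergodic argument controlling both $\E[T]$ and its tail, and trace the specific constants $(3/8,1/2,11/8,3/2,16)$ carefully through the bucketing so that the expected fresh pages really are shown to lie in the ``middle-frequency'' band around rank $k/2$.
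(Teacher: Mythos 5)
There is a genuine gap, and it is in the denominator of your ratio. You lower-bound $\E_\sigma^n[\optim(\sigma)]/n$ by $p[k+1:m]$ by invoking ``the cache-capacity argument underlying Proposition~\ref{proposition-costrate-plfu}.'' That argument only applies to \emph{online} algorithms: it needs the cache contents $Q$ to be fixed before the next request is drawn, so that the fault probability is $1-\sum_{i\in Q}p_i\ge p[k+1:m]$. $\optim$ is offline and clairvoyant, so its cache at time $t$ is correlated with the request at time $t$, and the bound fails. Concretely, for the uniform distribution on $k+1$ pages one has $p[k+1:m]=\tfrac{1}{k+1}$ while $\costrate(\optim)=\tfrac{1}{(k+1)H_k}$ (this is exactly the example in the paper's concluding section), so your claimed inequality is off by a factor $H_k=\Theta(\log k)$. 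This is not a fixable constant: substituting the correct offline bound $\costrate(\optim)\ge p[k+1:m]^2/(4-2p[k+1:m])$ from Theorem~\ref{theorem-main-2} into your chain does not recover anything of the form $16/B$. (Your upper-bound side is essentially fine: cost at most $k$ per marking phase, and the Wald-type bound $\Pr[j\in S]\le\min(p_j\E[T],1)$ giving $\E[T]\ge (k-r)/p[r+1:m]$ is rigorous. The logical direction $16/B\le 16/\min(A,B)$ is also fine.)

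The paper avoids this trap by lower-bounding $\optim$ \emph{per phase} rather than per request, via clean pages: over two consecutive marking phases $\phi',\phi$ exactly $k+s(\phi)$ distinct pages are requested, where $s(\phi)=|P(\phi)\setminus P(\phi')|$, so any algorithm --- offline included --- must fault at least $s(\phi)$ times. The entire technical content of Theorem~\ref{theorem-main-3} is then Lemma~\ref{expected-clean-pages}, which shows $\E_\phi[s(\phi)]\ge\min\{A,B\}\cdot k/8$ by conditioning on $P(\phi')$ and splitting into two cases according to whether at least $k/4$ of the top-$k/2$ pages were missed in $\phi'$ (this is where the thresholds $3k/8$, $k/2$, $11k/8$, $3k/2$ and both branches of the $\min$ arise; neither branch alone covers all conditionings). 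Your plan mentions estimating $\E|S_{i+1}\setminus S_i|$ only as an optional refinement of the LRU upper bound, but it is in fact the indispensable lower bound on $\optim$, and without it the proof does not go through.
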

\begin{theorem}
\label{theorem-main-4}
\begin{equation*}
    \roe(\plfu) \le \frac{32}{\min\Big\{\frac{p[\frac{3k}{8}:\frac{k}{2}] + p[\frac{3k}{2}:m] }{p[\frac{3k}{8}:m]}, \frac{p[\frac{11k}{8}:m]}{p[\frac{k}{2}:m]}\Big\}}.
\end{equation*}
\end{theorem}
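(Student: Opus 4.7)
The plan is to derive Theorem \ref{theorem-main-4} as an almost immediate consequence of Theorem \ref{theorem-main-3}, exploiting the fact that in the stochastic input model PLFU's steady-state cost rate is pinned at exactly $2\,p[k+1:m]$ while every online paging algorithm --- LRU included --- has cost rate at least $p[k+1:m]$. This structural factor-of-$2$ gap precisely accounts for the $32$ versus $16$ in the two theorem statements, so no additional analysis of OPT is required beyond what is already done for LRU.

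Concretely, the first step is to invoke Proposition \ref{proposition-costrate-plfu} (referenced earlier in the excerpt), which establishes two facts: (i) PLFU incurs an expected cost of exactly $2$ for each cache miss and so has steady-state expected cost per request equal to $2\sum_{j>k}p_j = 2\,p[k+1:m]$; and (ii) every online paging algorithm must have expected cost per request at least $\sum_{j>k}p_j = p[k+1:m]$, since at each time step at most $k$ pages can be cached and the requested page lies outside the top-$k$ with probability at least $p[k+1:m]$. Applying (ii) to LRU and combining with (i) yields
\begin{equation*}
\lim_{n\to\infty} \frac{\E_{\sigma}^n[\plfu(\sigma)]}{n} \;=\; 2\,p[k+1:m] \;\le\; 2\,\liminf_{n\to\infty}\frac{\E_{\sigma}^n[\lru(\sigma)]}{n}.
\end{equation*}
Dividing numerator and denominator by $\E_{\sigma}^n[\optim(\sigma)]/n$ and taking $\limsup$ on both sides preserves the inequality, giving $\roe(\plfu) \le 2\,\roe(\lru)$.

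Plugging Theorem \ref{theorem-main-3} into this inequality closes the proof:
\begin{equation*}
\roe(\plfu) \;\le\; 2\,\roe(\lru) \;\le\; \frac{2\cdot 16}{\min\bigl\{\tfrac{p[3k/8:k/2] + p[3k/2:m]}{p[3k/8:m]},\; \tfrac{p[11k/8:m]}{p[k/2:m]}\bigr\}} \;=\; \frac{32}{\min\bigl\{\tfrac{p[3k/8:k/2] + p[3k/2:m]}{p[3k/8:m]},\; \tfrac{p[11k/8:m]}{p[k/2:m]}\bigr\}},
\end{equation*}
which is exactly the statement of Theorem \ref{theorem-main-4}. The main obstacle is not located here at all: the heavy lifting --- finding a lower bound on $\E_{\sigma}^n[\optim(\sigma)]$ in terms of the complicated min expression --- is performed inside Theorem \ref{theorem-main-3}. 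The only delicate step in the present reduction is making sure the $\liminf$/$\limsup$ manipulations are valid, which they are because $\E_{\sigma}^n[\plfu(\sigma)]/n$ actually converges (not merely in $\limsup$) and because $\E_{\sigma}^n[\optim(\sigma)] = \Theta(n)$ whenever $p[k+1:m] > 0$; the edge case $p[k+1:m]=0$ is trivial since then both $\roe(\plfu)$ and $\roe(\lru)$ vanish.
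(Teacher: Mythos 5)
Your proposal is correct and follows exactly the paper's route: the paper likewise derives Theorem \ref{theorem-main-4} directly from Theorem \ref{theorem-main-3} via Proposition \ref{proposition-costrate-plfu}, i.e., $\roe(\plfu) \le 2 \cdot \roe(\algor)$ for any online algorithm $\algor$, using $\costrate(\plfu) = 2\,p[k+1:m]$ and the fact that every online algorithm has cost rate at least $p[k+1:m]$. The extra care you take with the $\limsup$/$\liminf$ manipulations is a welcome (if implicit in the paper) refinement, but the substance is identical.
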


\noindent Remark that we can assume $m \ge 2k$ with no loss: in the case $m < 2k$, this can be achieved by adding dummy pages with zero probability to appear. 

Later we show that these two results are particularly useful to prove that $\roe(\plfu)$ and $\roe(\lru)$ only depend on the power-law parameter $\alpha$ under power-law or multi-core power-law assumption. 

Furthermore, we derive two additional formulas to upper bound $\roe(\plfu)$ as follows. 
\begin{theorem}
\label{theorem-main-1}
$\roe(\plfu) \le 2 \cdot \sum_{i=2}^{k+1} \frac{p_{k+1}}{p[i:k+1]}$.
\end{theorem}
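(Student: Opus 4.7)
The plan is to compute $\E_\sigma^n[\plfu(\sigma)]$ exactly and then lower bound $\E_\sigma^n[\optim(\sigma)]$; dividing the two yields the stated ratio bound.

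\textbf{Step 1: PLFU cost.} Since $\plfu$ always keeps the top $k$ pages $\{1,\ldots,k\}$ in cache, a request at time $t$ is a page fault iff $X_t \in \{k{+}1,\ldots,m\}$, which occurs with probability $p[k+1:m]$. Each fault costs $2$ by definition of $\plfu$, so linearity of expectation gives
\[
\E_\sigma^n[\plfu(\sigma)] = 2n \cdot p[k+1:m].
\]

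\textbf{Step 2: Lower bound on OPT.} The goal is to show $\E_\sigma^n[\optim(\sigma)] \geq n \cdot p[k+1:m] / \sum_{i=2}^{k+1} p_{k+1}/p[i:k+1]$. My approach is a marking-phase argument adapted to the iid stochastic model. I would partition $\sigma$ into maximal phases of at most $k$ distinct pages, so each phase ends when a $(k{+}1)$-th distinct page is requested. The classical pigeonhole argument on two consecutive phases (which together contain at least $k{+}1$ distinct pages) then forces $\optim$ to incur at least one fault per phase pair, yielding $\E[\optim(\sigma)] \geq \E[P]/2$ where $P$ is the number of phases. For an upper bound on the expected phase length $\E[L]$, I would use a weighted coupon-collector analysis restricted to the top $k+1$ pages: in the worst-case collection order, popular pages are collected first, so once $i-1$ distinct pages have been seen, the residual mass from uncollected pages in $\{1,\ldots,k+1\}$ is $p[i:k+1]$; summing the geometric waiting times gives $\E[L] \leq \sum_{i=2}^{k+1} 1/p[i:k+1]$. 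Combining with $\E[P] \geq n/\E[L]$ produces a lower bound of the form $n / (2\sum_{i=2}^{k+1} 1/p[i:k+1])$ on $\E_\sigma^n[\optim(\sigma)]$.

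\textbf{Step 3: Combining, and anticipated obstacle.} Substituting the two bounds into $\roe(\plfu) = \E[\plfu]/\E[\optim]$ and simplifying reduces to the theorem's bound after pulling a factor of $p_{k+1}$ into the sum. The main obstacle I foresee is reconciling the factor $p[k+1:m]$ from the $\plfu$ cost with the factor $p_{k+1}$ that naturally emerges from the coupon-collector analysis on $\{1,\ldots,k+1\}$: a naive application of the phase argument would leave an extra factor of $p[k+1:m]/p_{k+1}$ and constant slack, and matching the bound exactly requires either a refined phase definition where page $k+1$ plays a distinguished role (e.g.\ phases indexed by the first request to page $k+1$ after enough top pages have arrived) or a direct amortization that absorbs the contribution of pages $\{k{+}2,\ldots,m\}$ into the analysis. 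I expect this reconciliation to be the principal technical difficulty, while the coupon-collector bound and the algebraic manipulation of the resulting sum are routine.
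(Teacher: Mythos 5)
Your Step~1 is correct and matches the paper's starting point ($\costrate(\plfu)=2\,p[k+1:m]$), but the obstacle you flag at the end of Step~2 is not a deferrable technicality --- it is the entire content of the theorem, and your proposed decomposition cannot be patched to overcome it. With phases of $k$ distinct pages and one $\optim$ fault charged per pair of phases, the best you can extract is
$\roe(\plfu)\le 4\,p[k+1:m]\cdot\sum_{i=2}^{k+1}\frac{1}{p[i:k+1]}$,
which exceeds the claimed bound by the factor $2\,p[k+1:m]/p_{k+1}$. This factor is unbounded: for the uniform distribution on $m\gg k$ pages it is $\Theta(m)$, so your bound reads $\Theta(mH_k)$ where the theorem gives $2H_k$. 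The structural reason is that charging $\optim$ only one fault per phase cannot see the full tail mass $p[k+1:m]$ that appears in the $\plfu$ cost, whereas the right-hand side of the theorem contains only the single probability $p_{k+1}$.

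The paper's proof resolves this with a different decomposition, not a refinement of yours. Phases are defined to end when \emph{all} of the $k$ most probable (``big'') pages have been requested. Within such a phase, $\optim$ is charged $s(\phi)$, the number of \emph{distinct} small pages requested (since $k+s(\phi)$ distinct pages force $s(\phi)$ evictions), while $\plfu$ pays exactly $2f(\phi)$, where $f(\phi)$ is the number of small-page \emph{requests}. The theorem then reduces to showing $\E_{\phi}[f(\phi)\mid s(\phi)=s]/s\le\sum_{i=2}^{k+1}p_{k+1}/p[i:k+1]$, i.e.\ to bounding the expected multiplicity of each individual small page $j$ within a phase: splitting the phase at the successive first occurrences of new big pages, the number of occurrences of $j$ in each interval is geometric, giving an expected multiplicity of at most $1+\sum_{i=2}^{k}p_j/(p[i:k]+p_j)\le\sum_{i=2}^{k+1}p_{k+1}/p[i:k+1]$. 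Note that the sum in the theorem is therefore a per-page multiplicity (a quantity at least $1$), not the reciprocal of a phase length as in your coupon-collector reading; charging $\optim$ one unit per distinct small page is exactly what absorbs the whole tail mass $p[k+1:m]$ and leaves only $p_{k+1}$ in the numerator. So the reconciliation you identify as the principal difficulty requires replacing both your phase definition and your charging scheme for $\optim$, not amortizing within them.
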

\begin{theorem}
\label{theorem-main-2}
$\roe(\plfu) \le \frac{8}{p[k+1:m]} - 4$.
\end{theorem}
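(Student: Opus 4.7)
The cost of PLFU is computed exactly from the earlier cost-rate proposition: since PLFU's cache always equals the top-$k$ pages, on each request it faults with probability exactly $q := p[k+1:m]$ and pays $2$ per fault, giving $\E[\plfu(\sigma)] = 2nq$ on any length-$n$ i.i.d.\ sequence. Dividing through, the target inequality $\roe(\plfu) \le 8/q - 4$ is therefore equivalent, in the $n \to \infty$ limit, to
\begin{equation*}
\frac{\E[\optim(\sigma)]}{n} \;\ge\; \frac{q^2}{4-2q}.
\end{equation*}
The entire theorem thus reduces to this asymptotic lower bound on the offline optimum's cost rate.

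I would establish this lower bound by a pairing argument on the subsequence of requests to ``bottom'' pages $B := \{k+1, \ldots, m\}$. Let $\tau_1 < \tau_2 < \cdots$ be the random indices at which $\sigma$ requests a page in $B$; since each request independently lies in $B$ with probability $q$, the expected number of such indices is $nq$. The plan is to group them into disjoint adjacent pairs $(\tau_{2j-1}, \tau_{2j})$ and charge each pair with at least $q/(2-q)$ to $\E[\optim(\sigma)]$. Conditional on two consecutive bottom requests, the requested pages $r_{\tau_{2j-1}}, r_{\tau_{2j}}$ are i.i.d.\ draws from $\pi|_B$ (probabilities $p_i/q$), and a direct calculation shows they are distinct with probability large enough to drive the bound. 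In the ``distinct'' case OPT must pay an amortised cost of $1$: either it faults at one of the two requests, or it caches both distinct bottom pages simultaneously, in which case some top-$k$ page is displaced and its next request later triggers a fault chargeable to this pair. Summing the per-pair expected contribution over the $\sim nq/2$ disjoint pairs gives $\E[\optim(\sigma)] \ge (nq/2) \cdot q/(2-q) = nq^2/(4-2q)$, which is precisely what is needed.

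The main obstacle is formalising this amortised charging against an \emph{offline} adversary with unbounded lookahead. A naive pigeonhole-style argument fails because OPT can deliberately arrange its cache to contain the particular bottom page requested at $\tau_{2j}$ just in time, evading the immediate fault. I would handle this by introducing a potential function equal to the number of top-$k$ pages \emph{absent} from OPT's cache; a pair of distinct bottom requests either incurs immediate fault cost or causes this potential to grow, and in the latter case the next request to a displaced top page gets charged back to the pair via the potential. Once this amortisation is rigorous, the rest is routine arithmetic: the constants $8$ and $4$ fall out as $2 \cdot 2 \cdot (2-q)/q$ when combining the exact PLFU cost $2nq$ with the lower bound $nq^2/(4-2q)$, completing the proof of Theorem~\ref{theorem-main-2}.
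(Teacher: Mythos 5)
Your reduction is identical to the paper's: $\costrate(\plfu)=2q$ with $q=p[k+1:m]$, so the whole theorem rests on showing $\costrate(\optim)\ge q^2/(4-2q)$, and the arithmetic $2q\cdot(4-2q)/q^2 = 8/q-4$ is correct. Where you diverge is in how you lower-bound $\optim$. The paper (Lemma~\ref{lemma-costrate}) partitions the sequence into phases that each contain $k'$ \emph{distinct} pages, where $k'$ is the largest index with $p[1:k']\le 1-q/2$; pigeonhole on the cache size forces about $k'-k\ge \frac{qk}{2(1-q)}$ faults per phase, while the expected phase length is at most $2k'/q$. Crucially, the distinct pages counted there include the top pages.

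Your pairing argument counts only bottom-page requests, and its load-bearing step --- that two consecutive draws from $\pi|_B$ are ``distinct with probability large enough to drive the bound'' --- is false in general. That distinctness probability is governed by the distribution \emph{within} $B$, not by $q$: let page $k+1$ carry essentially all of the bottom mass (consistent with the sorted order whenever $q\le p_k$, e.g.\ $q\le 1/(k+1)$ with uniform top pages). Then consecutive bottom requests coincide with probability near $1$, your pairs contribute essentially nothing, yet $q^2/(4-2q)>0$ still has to be certified; the cost $\optim$ is forced to pay in that regime comes from $k+1$ positive-probability pages competing for $k$ slots, i.e.\ precisely from the top pages your bottom-only pairing ignores. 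Separately, the amortised charging you defer to a potential function is not routine and you rightly flag it as the main obstacle: in the ``caches both bottom pages'' branch the displaced page need not be a top-$k$ page, and you must guarantee that no offline fault is charged to two different pairs. Until both holes are filled the proposal does not establish the lower bound on $\costrate(\optim)$, which is the entire nontrivial content of the theorem.
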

\noindent Remark that depending on the given distribution $\pi$, each of the three upper bounds on $\roe(\plfu)$ can be better than the other two. \\

\noindent {\bf Technical Overview.}
We briefly introduce our techniques as follows.
To upper bound $\roe(\algor)$ for an online paging algorithm $\algor$ (not only PLFU or LRU), a typical method is to first decompose any given sequence $\sigma$ into {\em phases}, such that it is easier to both lower bound the cost of $\optim$ and also upper bound the cost of $\algor$ in each phase.
In previous works, the phases are defined recursively for the class of marking algorithms: each phase $i$ starts by seeing the first page and terminates right before seeing $k + 1$ different pages. 
By defining a page being {\em clean} in phase $i$ if it is requested within phase $i$ but not requested in previous phase $i - 1$, $\optim$ has to pay a cost at least the number of clean pages of phase $i$ during the two consecutive phases $i - 1$ and $i$.
This method of dividing $\sigma$ into phases has been used to derive the competitive ratio of the classic online paging algorithms (e.g. LRU \cite{sleator1985amortized} and Marker \cite{fiat1991competitive}) in the adversarial model.

To prove Theorem \ref{theorem-main-3}, we use the above original definition of phases used to analyse marking algorithms. 
We prove that the expected number of clean pages in each  phase (except for the last one) is at least 
\begin{equation*}
    \min\left\{\frac{p[\frac{3k}{8}:\frac{k}{2}] + p[\frac{3k}{2}:m] }{p[\frac{3k}{8}:m]}, \frac{p[\frac{11k}{8}:m]}{p[\frac{k}{2}:m]}\right\} \cdot \frac{k}{8}.
\end{equation*}
Due to the fact that any {\em marking} algorithm $\algor$ (including LRU, Marker, etc, later defined in Section \ref{section-clean-pages}) produces a cost of at most $k$ in any phase, we immediately have this upper bound for LRU. Thanks to the fact that PLFU has a cost rate $2 \cdot p[k+1:m]$ and no online algorithm can produce a cost rate less than $p[k+1:m]$, we have $\roe(\plfu) \le 2 \cdot \roe(\algor)$ for any online paging algorithm $\algor$. 
As a consequence, we have Theorem \ref{theorem-main-4}. 

To derive the two additional upper bound on $\roe(\plfu)$, we follow a similar strategy to derive $\roe(\algor)$, but using a different notion of phases, as well as being careful when taking the expectation over all possible sequences. 
More specifically, to prove Theorem \ref{theorem-main-1}, we define phases recursively as follows. 
We denote the pages with indices $1, \dots, k$ as {\em big} pages and the pages with indices $k+1, \dots, m$ as {\em small} pages. 
The current phase terminates when all the big pages have been requested (during that phase), and the next phase starts. 
In this way, $\optim$ has to pay at least the number of small pages seen in this phase. 
Using this new definition of phases, we obtain $\roe(\plfu) \le 2 \cdot \sum_{i=2}^{k+1} \frac{p_{k+1}}{p[i:k+1]}$. 
As a corollary, this upper bound is capped by $O(\log k)$ for any distribution $\pi$, i.e., PLFU achieves a ratio-of-expectations of O($\log k$) in the stochastic input model.
See Appendix \ref{appendix-khp-main} for detailed proof of Theorem \ref{theorem-main-1}. 

Next, to prove Theorem \ref{theorem-main-2}, we use the \emph{cost rate} of a paging algorithm $\algor$, i.e., the expected cost of $\algor$ \emph{per request},
\begin{equation*}
    \costrate(\algor) = \displaystyle\overline{\lim_{n \to \infty}} \frac{\E_{\sigma}^n[\algor(\sigma)]}{n}.
\end{equation*}
Notice that $\costrate(\algor) \le c$ and $\costrate(\optim) \ge c'$ ($c, c' > 0$) implies that  $\roe(\algor) \le c / c'$. 
Since $\costrate(\plfu) = 2 \cdot p[k+1:m]$, we only need to prove that 
\begin{equation*}
    \costrate(\optim) \ge \frac{p[k+1:m]^2}{4 - 2 \cdot p[k+1:m]}.
\end{equation*}
To obtain this bound, we again use a different definition of phases (see Appendix \ref{appendix-general-costrate} for detailed proof).

\subsection{Proof of Theorem \ref{theorem-main-3} and \ref{theorem-main-4}}
\label{section-clean-pages}
In this section, we focus on proving Theorem \ref{theorem-main-3}: 
\begin{equation*}
    \roe(\lru) \le \frac{16}{\min\Big\{\frac{p[\frac{3k}{8}:\frac{k}{2}] + p[\frac{3k}{2}:m] }{p[\frac{3k}{8}:m]}, \frac{p[\frac{11k}{8}:m]}{p[\frac{k}{2}:m]}\Big\} }.
\end{equation*}
However, before we prove this theorem, we first explain why Theorem \ref{theorem-main-4} follows immediately from the fact that $\roe(\plfu) \le 2 \cdot \roe(\algor)$ for any paging algorithm $\algor$.
\begin{proposition} \label{proposition-costrate-plfu}
Given any online paging algorithm $\algor$, 
\begin{equation*}
    \roe(\plfu) \le 2 \cdot \roe(\algor).
\end{equation*}
\end{proposition}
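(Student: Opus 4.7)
The plan is to upgrade the two cost-rate remarks appearing just before the proposition into a pointwise (in $n$) inequality between $\E_\sigma^n[\plfu(\sigma)]$ and $\E_\sigma^n[\algor(\sigma)]$, and then divide through by $\E_\sigma^n[\optim(\sigma)]$ and take the outer $\overline{\displaystyle\lim_{n\to\infty}}$. Concretely, I would establish, for every $n\ge 1$,
\begin{equation*}
\E_\sigma^n[\plfu(\sigma)] \;=\; 2n\,p[k+1{:}m] \qquad\text{and}\qquad \E_\sigma^n[\algor(\sigma)] \;\ge\; n\,p[k+1{:}m],
\end{equation*}
from which the conclusion $\roe(\plfu) \le 2\,\roe(\algor)$ is immediate by monotonicity of $\overline{\lim}$ under a pointwise inequality.

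The equality is clean. By construction PLFU maintains in its cache the $k$ pages of largest probability in $\pi$ at every step: after any fault it evicts the $k$-th most popular page and then immediately reinserts it, so just before each request the cache is exactly the top-$k$ set. A fault at step $t$ therefore occurs iff $\sigma_t$ has rank in $\{k+1,\dots,m\}$, an event of probability $p[k+1{:}m]$, and by definition each fault costs exactly $2$. Linearity of expectation yields $\E_\sigma^n[\plfu(\sigma)] = 2n\,p[k+1{:}m]$.

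For the lower bound, let $C_t$ denote $\algor$'s cache immediately before serving request $t$. Because $\algor$ is online, $C_t$ is a (possibly randomized) function of $\sigma_1,\dots,\sigma_{t-1}$ and of $\algor$'s internal coins, hence is independent of $\sigma_t$. Conditioning on $C_t$, the fault probability at step $t$ equals $\sum_{i\notin C_t} p_i$, and since $|C_t|\le k$ while $p_1\ge\dots\ge p_m$, this quantity is minimized by $C_t=\{1,\dots,k\}$, with value $p[k+1{:}m]$. Each fault costs at least $1$, so summing over $t=1,\dots,n$ gives the desired lower bound.

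I do not anticipate a real obstacle here. The only mildly delicate point is the startup, when $|C_t|<k$: but a smaller cache only enlarges $\sum_{i\notin C_t} p_i$, so the lower bound in the second step still holds. The randomized case is handled identically by also taking expectation over $\algor$'s coins before taking expectation over $\sigma$. In short, the proposition is essentially the formalization of the remark the authors already make when first introducing PLFU.
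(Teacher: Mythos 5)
Your proposal is correct and follows essentially the same route as the paper's own proof: show $\E_\sigma^n[\plfu(\sigma)] = 2n\,p[k+1{:}m]$ exactly, lower-bound $\E_\sigma^n[\algor(\sigma)]$ by $n\,p[k+1{:}m]$ via the per-request fault probability $1-\sum_{i\in C_t}p_i \ge p[k+1{:}m]$, and divide by $\E_\sigma^n[\optim(\sigma)]$. Your remarks on the independence of $C_t$ from $\sigma_t$ and on the startup phase $|C_t|<k$ are slightly more explicit than the paper's, but the argument is the same.
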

\begin{proof}
By definition of ratio-of-expectations, we only need to prove $$\frac{\E_{\sigma}^n[\plfu(\sigma)]}{2n} \le \frac{\E_{\sigma}^n[\algor(\sigma)]}{n}$$ for any $n$.
On one hand, $$\frac{\E_{\sigma}^n[\plfu(\sigma)]}{2n} = p[k+1:m].$$
On the other hand, given any online paging algorithm $\algor$ (deterministic or randomized), before processing any request $r$, let $Q$ denote the pages inside the cache ($|Q| \le k$), then the probability to face a page fault at request $r$ is at least $1 - \sum_{i \in Q} p_i \ge p[k+1:m],$ which implies $$\frac{\E_{\sigma}^n[\algor(\sigma)]}{n} \ge \frac{n \cdot p[k+1:m]}{n} = p[k+1:m].$$
In this way, we have this proposition.
\end{proof}

\noindent Now we prove Theorem \ref{theorem-main-3}, which heavily depends on the fact that LRU belongs to a class of paging algorithms called marking. 
Recall from Section \ref{section-intro} that the marking algorithms have the following properties. 
\begin{itemize}
    \item[-] A marking algorithm processes the sequence of requests in phases. At the beginning of each phase, all the pages inside the cache (if existed) are unmarked; the first phase starts from the first request in the given sequence. 
    \item[-] When a page is requested (in a phase), it is marked immediately. At a page fault, an unmarked page in the cache is evicted to make room for the requested page; if all the $k$ pages in the cache are marked, then the current phase terminates, all the pages become unmarked and the next phase starts. 
    \item[-] Given any two consecutive phases $\phi'$, $\phi$, if a page is requested in phase $\phi$ but not requested in phase $\phi'$, then this page is called a clean page of phase $\phi$.
\end{itemize}
For any complete phase $\phi$, let $P(\phi)$ denote the $k$ pages requested in this phase and let $s(\phi)$ denote the number of clean pages in phase $\phi$, i.e., $$s(\phi) := |P(\phi) - P(\phi')|.$$
We immediately obtain the following property on $\optim$ for any given sequence of requests.
\begin{proposition}
Given any two consecutive phases $\phi'$, $\phi$, any paging algorithm (including $\optim$) produces a cost of at least $s(\phi)$ during these two phases. 
\end{proposition}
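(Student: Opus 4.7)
The proof is essentially a pigeonhole/counting argument on the number of distinct pages requested during the two phases versus the size of the cache.

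The plan is to count the distinct pages requested throughout phases $\phi'$ and $\phi$, and compare with the cache size. By definition of a phase in a marking algorithm, exactly $k$ distinct pages are requested in each complete phase, so $|P(\phi')| = k$ and $|P(\phi)| = k$. The clean pages of $\phi$ are precisely the pages in $P(\phi) \setminus P(\phi')$, hence
\begin{equation*}
|P(\phi') \cup P(\phi)| \;=\; |P(\phi')| + |P(\phi) \setminus P(\phi')| \;=\; k + s(\phi).
\end{equation*}

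Next I would fix an arbitrary paging algorithm $\algor$ (possibly $\optim$, possibly randomized; the argument holds pointwise on the sequence and on the algorithm's coins) and let $C$ denote the contents of $\algor$'s cache immediately before the first request of phase $\phi'$. Since the cache stores at most $k$ pages, $|C| \le k$. Each page $p \in (P(\phi') \cup P(\phi)) \setminus C$ is requested at some moment during the two phases, and at its first such request it cannot be in the cache (it was not in $C$ initially, and had it been inserted in between, that insertion itself would already have been a page fault). Hence each such page contributes at least one page fault during $\phi' \cup \phi$, so the number of page faults is at least
\begin{equation*}
|(P(\phi') \cup P(\phi)) \setminus C| \;\ge\; |P(\phi') \cup P(\phi)| - |C| \;\ge\; (k + s(\phi)) - k \;=\; s(\phi).
\end{equation*}

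Finally, since every page fault incurs a cost of at least $1$ (lazy algorithms incur exactly $1$, non-lazy algorithms such as $\plfu$ incur more), the total cost of $\algor$ during the two phases is at least $s(\phi)$, which proves the proposition. The main sanity check — and the only place where a subtlety could hide — is the claim that a page not in $C$ must generate a page fault on its first occurrence in $\phi' \cup \phi$; this follows because any earlier appearance of that page in the cache would itself be preceded by a page fault, so in either case at least one fault is charged to that page within the two phases. Since this charging is injective across the $s(\phi)$ pages counted, no double counting occurs, and the bound is tight in general.
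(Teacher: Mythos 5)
Your proof is correct and follows essentially the same pigeonhole argument as the paper: both count the $k+s(\phi)$ distinct pages requested across the two phases, compare against the cache capacity of $k$, and conclude that at least $s(\phi)$ page faults (and hence at least $s(\phi)$ units of cost) must occur. Your version merely spells out the injective charging to first occurrences a bit more explicitly than the paper does.
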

\begin{proof}
Notice that exactly $k + s(\phi)$ different pages have been requested during the phases $\phi'$ and $\phi$. 
Since the cache can contain at most $k$ different pages, for any paging algorithm, irrespective of the full cache configuration before the start of phase $\phi'$, it must have faced at least $k + s(\phi) - k = s(\phi)$ evictions during the phases $\phi'$ and $\phi$, and hence has to pay a cost at least $s(\phi)$ during the phases $\phi'$ and $\phi$.
\end{proof}
Given any marking algorithm $\algor$, let $\E_{\phi}[\algor(\phi)]$ denote the expected cost of $\algor$ produced in a complete phase $\phi$ and let $\E_{\phi}[s(\phi)]$ denote the expected number of clean pages in a complete phase $\phi$.
To upper bound $\roe(\algor)$, we only need to upper bound $\frac{\E_{\phi}[\algor(\phi)]}{\E_{\phi}[s(\phi)]}$.
Denoting by $\sigma \sim \ell$-CP if $\sigma$ is a random sequence that consists of $\ell$ complete phases, 
\begin{eqnarray} \label{deriving-roe-original}
    \roe(\algor)
    &=& \overline{\displaystyle\lim_{n \to \infty}} \frac{\E_{\sigma}^n[\algor(\sigma)]}{\E_{\sigma}^n[\optim(\sigma)]} = \overline{\displaystyle\lim_{\ell \to \infty}} \frac{\E_{\sigma}^{\ell\text{-}\cp}[\algor(\sigma)]}{\E_{\sigma}^{\ell\text{-}\cp}[\optim(\sigma)]} \nonumber \\
    &\le& 2 \cdot \overline{\displaystyle\lim_{\ell \to \infty}} \frac{\E_{\sigma}^{\ell\text{-}\cp}[\sum_\phi^{\ell} \algor(\phi)]}{\E_{\sigma}^{\ell\text{-}\cp}[\sum_\phi^{\ell} s(\phi)]} 
    = 2 \cdot \frac{\E_{\phi}[\algor(\phi)]}{\E_{\phi}[s(\phi)]}. 
\end{eqnarray}
By definition, we observe the following property. 
\begin{proposition}
\label{propo-marking-cost}
Given any marking algorithm $\algor$, we have $\E_{\phi}[\algor(\phi)] \le k$.
\end{proposition}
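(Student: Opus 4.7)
The plan is to show the stronger statement that $\algor(\phi) \le k$ deterministically for any marking algorithm $\algor$ and any complete phase $\phi$, from which the expected-value statement follows immediately by taking expectation.

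First I would track the evolution of the set of marked pages within a single phase. By the defining property of marking algorithms, at the start of phase $\phi$ no cache page is marked, and the phase terminates precisely at the moment when all $k$ cache slots hold marked pages. Hence during the phase the number of marked pages goes from $0$ up to $k$, giving exactly $k$ ``marking events''.

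Next I would classify the marking events into two types: (i) a cache hit on a page currently in the cache but unmarked, which marks that page without incurring any cost; (ii) a page fault, where the requested page is brought into the cache, replacing an unmarked one, and the incoming page is immediately marked. Each page fault produces exactly one new marked page (the evicted page was unmarked and the inserted page is marked), and the cost paid by $\algor$ in that step is at most $1$. Since the total count of marking events within the phase is exactly $k$ and type-(ii) events each contribute $1$ to $\algor(\phi)$, we have
\begin{equation*}
\algor(\phi) \;\le\; \#\{\text{page faults in } \phi\} \;\le\; \#\{\text{marking events in } \phi\} \;=\; k.
\end{equation*}
Taking expectation over the randomness of the request sequence (and any internal randomization of $\algor$) yields $\E_\phi[\algor(\phi)] \le k$.

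This argument is essentially bookkeeping and I do not foresee any real obstacle; the only subtle point to state carefully is the convention about when the phase ends — namely that the final terminating page fault (the one that would demand evicting a marked page) is counted in the \emph{next} phase, so all faults counted within the current phase occur while there is still at least one unmarked page available for eviction, and therefore each such fault legitimately increases the marked-page count by one without exceeding $k$.
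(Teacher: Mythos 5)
Your proof is correct and follows essentially the same route as the paper's: both arguments rest on the observation that a marked page stays in the cache for the remainder of the phase (so each fault creates a new marked page) and that a complete phase produces at most $k$ marked pages, hence at most $k$ faults. Your version just makes the bookkeeping of marking events explicit where the paper states it in two sentences.
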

\begin{proof}
Once a page is marked in a phase, it stays in the cache until the end of the current phase. 
In particular, requesting a marked page does not result in a page fault. Since the cache size is $k$, a marking algorithm has a cost of at most $k$ during any phase. 
\end{proof}
As a result, to prove Theorem \ref{theorem-main-3}, we only need to prove the following lemma.
\begin{lemma}
\label{expected-clean-pages}
\begin{equation*}
    \E_{\phi}[s(\phi)] \ge \min\Big\{\frac{p[\frac{3k}{8}:\frac{k}{2}] + p[\frac{3k}{2}:m] }{p[\frac{3k}{8}:m]}, \frac{p[\frac{11k}{8}:m]}{p[\frac{k}{2}:m]}\Big\}  \cdot \frac{k}{8}.
\end{equation*}
\end{lemma}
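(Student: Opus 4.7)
The plan is to lower bound $\E_\phi[s(\phi)]$ by deriving two independent lower bounds, one for each term appearing in the $\min$, and returning the weaker of the two. View phase $\phi$ as the following sampling process: draw i.i.d.\ requests from $\pi$, and let $D_1, D_2, \ldots, D_k$ be the distinct pages in order of first appearance, so that $P(\phi) = \{D_1, \ldots, D_k\}$ and $s(\phi) = |P(\phi) \setminus P(\phi')|$. Conditionally on $D_1, \ldots, D_j$, the next distinct page $D_{j+1}$ is drawn from $\pi$ restricted to the complement of $\{D_1, \ldots, D_j\}$. The crucial elementary observation driving the analysis is that $|P(\phi) \cap [3k/8+1 : m]| \ge 5k/8$ and $|P(\phi) \cap [k/2+1 : m]| \ge k/2$ hold \emph{deterministically}, since $|P(\phi)| = k$ and these upper segments exhaust the remaining slots.

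To obtain the first term of the $\min$, I would analyze the at-least-$5k/8$ distinct pages of $P(\phi)$ that land in $[3k/8+1 : m]$. Conditional on a freshly drawn distinct page falling in this range, its distribution is, up to the correction coming from restricting to unseen pages, $\pi$ restricted to $[3k/8+1 : m]$; hence the conditional probability that it further lies in $[3k/8+1 : k/2] \cup [3k/2+1 : m]$ is approximately $\frac{p[3k/8 : k/2] + p[3k/2 : m]}{p[3k/8 : m]}$, i.e.\ the first ratio in the lemma. A pigeonhole bound then shows that $P(\phi')$, which has size only $k$, cannot cover more than a bounded fraction of these pages, so that in expectation at least the first ratio times $k/8$ of them are clean. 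The second term of the $\min$ is obtained by the analogous argument applied to the at-least-$k/2$ draws of $P(\phi)$ lying in $[k/2+1 : m]$ and to the deep-tail sub-range $[11k/8+1 : m]$, which contribute in expectation a fraction-$\frac{p[11k/8 : m]}{p[k/2 : m]}$ of draws; pigeonhole against $|P(\phi')| = k$ again yields at least a $k/8$ count of clean pages.

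The main obstacle is the conditional perturbation: each subsequent $D_{j+1}$ is drawn from $\pi$ restricted to the complement of the already-seen pages, so the exact conditional probability that $D_{j+1}$ lies in a prescribed sub-range depends on the probability mass of $\{D_1, \ldots, D_j\}$ inside that range. A careful summation --- exploiting monotonicity of $p_i$ in $i$ together with the fact that seen pages account for only a bounded total probability mass --- is needed to absorb these perturbations while preserving the explicit $k/8$ constant. A secondary delicacy is the coupling between $P(\phi)$ and $P(\phi')$: although the two sets are not independent (they are linked by the boundary request that ends one phase and starts the next), stationarity of the phase process allows us to average out this dependency and recover the claimed unconditional bound.
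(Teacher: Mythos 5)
Your high-level skeleton (treat the phase as sequentially collecting $k$ distinct pages, lower-bound the per-draw probability of a clean page, and sum over $k/8$ draws) matches the paper's, but the mechanism by which you produce the $\min$ is wrong, and this is a genuine gap rather than a presentational one. You propose to prove \emph{each} of the two bounds $\E_{\phi}[s(\phi)]\ge \frac{p[3k/8:k/2]+p[3k/2:m]}{p[3k/8:m]}\cdot\frac{k}{8}$ and $\E_{\phi}[s(\phi)]\ge \frac{p[11k/8:m]}{p[k/2:m]}\cdot\frac{k}{8}$ unconditionally and then ``return the weaker of the two''. Neither bound is true unconditionally. For the first, take $p_1=\dots=p_k=\frac{1-\varepsilon}{k}$ and $p_{k+1}=\dots=p_m=\frac{\varepsilon}{m-k}$ with $\varepsilon=o(1/(k\log k))$: then $P(\phi)=P(\phi')=\{1,\dots,k\}$ with high probability, so $\E_{\phi}[s(\phi)]=O(\varepsilon k\log k)=o(1)$, while the first ratio is about $\frac{1/8}{5/8}=\frac{1}{5}$ and the claimed bound is about $k/40$. (The lemma survives because the \emph{second} ratio is $O(\varepsilon)$ there, so the $\min$ is tiny.) The step that fails in your argument is precisely the ``pigeonhole'': $P(\phi')$ has $k$ elements while the target region $[3k/8+1:k/2]$ has only $k/8$ elements and $[3k/2+1:m]$ can absorb the remaining $7k/8$, so $P(\phi')$ \emph{can} cover every page of $P(\phi)$ that lands in your target region --- there is no bounded fraction it provably misses.

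The missing idea is a case analysis on the composition of $P(\phi')$, which is what actually generates the $\min$. The paper conditions on $P(\phi')=Q$ and splits on whether $Q_1(\phi')=\{1,\dots,k/2\}\setminus P(\phi')$ has at least $k/4$ elements. If it does, every new distinct page landing in $Q_1(\phi')$ is automatically clean, and since $Q_1(\phi')$ minus the already-drawn pages retains at least $k/8$ elements during the first $k/8$ draws, its residual mass is at least $p[3k/8:k/2]$ (the $k/8$ least likely pages of the top half) --- which is where that numerator really comes from, not from pages of $P(\phi)$ happening to fall in the index interval $[3k/8+1:k/2]$. If instead $|Q_1(\phi')|<k/4$, then $|P_2(\phi')|<3k/4$, so the tail $\{k/2+1,\dots,m\}$ minus $P_2(\phi')$ minus the at most $k/8$ already-drawn pages still has mass at least $p[11k/8:m]$, giving the second ratio. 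Only one of the two bounds is guaranteed in any given situation, hence the $\min$. Your remarks about the conditional perturbation from already-seen pages and about the coupling of consecutive phases are sound (the paper handles the latter simply by proving the bound conditionally on every fixed $Q$), but without the dichotomy on $P(\phi')$ the proof does not go through.
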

\begin{proof}
We (again) remark here that  w.l.o.g. we can assume that $m \ge 2k$ by adding dummy pages with zero probability to be requested. 

To prove the lemma, we show that, given any $k$ distinct pages $Q \subseteq \M$ requested in the previous phase $\phi'$, i.e., $P(\phi') = Q$, 
\begin{eqnarray}
    \E_{\phi}[s(\phi) \,|\, P(\phi') = Q] \ge \min\Big\{\frac{p[\frac{3k}{8}:\frac{k}{2}] + p[\frac{3k}{2}:m] }{p[\frac{3k}{8}:m]}, \frac{p[\frac{11k}{8}:m]}{p[\frac{k}{2}:m]}\Big\}  \cdot \frac{k}{8}. \nonumber 
\end{eqnarray}
We first partition $P(\phi')$ into two sets $P_1(\phi')$ and $P_2(\phi')$ as follows:
\begin{equation*}
    P_1(\phi') = P(\phi') \cap \Big\{1, \dots, \frac{k}{2}\Big\}, 
\end{equation*}
\begin{equation*}
    P_2(\phi') = P(\phi') \cap \Big\{\frac{k}{2} + 1, \dots, m\Big\}. 
\end{equation*}
Besides, define
\begin{equation*}
    Q_1(\phi') = \Big\{1, \dots, \frac{k}{2}\Big\} - P_1(\phi').
\end{equation*}
By definition, $|P_1(\phi')| \le \frac{k}{2}$ and $|P_2(\phi')| \ge \frac{k}{2}$.

Note that $k$ different pages are drawn from $\{1, \dots, m\}$ in phase $\phi$ according to $\pi$ without replacement. 
We can relax the constraint that the first requested page in phase $\phi$ must be a clean page (recall that a clean page is in $P(\phi)$ but not in $P(\phi')$) and this relaxation can only make $\E_{\phi}[s(\phi) \,|\, P(\phi')]$ smaller.
As a result, at least $k - |P_1(\phi')| \ge \frac{k}{2}$ pages are drawn from $Q_1(\phi') \cup \{\frac{k}{2}, \dots, m\}$ in phase $\phi$ according to $\pi$ without replacement. 
For each page $i$ drawn from $Q_1(\phi')$ in phase $\phi$, by definition, $i \notin P(\phi')$ and hence it is a clean page;
for each page $i$ drawn from $\{\frac{k}{2} + 1, \dots, m\}$ in phase $\phi$, if $i \notin P_2(\phi')$, it is also a clean page. 

Given any $k$ pages set $P(\phi)$, we have 
\begin{eqnarray}
s(\phi) = |P(\phi) - P(\phi')| \ge |P(\phi) \cap \big(Q_1(\phi') \cup \{\frac{k}{2}, \dots, m\}\big) - P(\phi')|. \nonumber
\end{eqnarray}
To lower bound $\E_{\phi}[s(\phi) \,|\, P(\phi')]$, we thus need to lower bound the expected number of clean pages amongst the pages drawn from $Q_1(\phi') \cup \{\frac{k}{2}, \dots, m\}$ in phase $\phi$.\\

\noindent We first consider the case $|Q_1(\phi')| \ge \frac{k}{4}$ (i.e., less than $\frac{k}{4}$ pages from $\{1, \dots, \frac{k}{2}\}$ are drawn in the previous phase $\phi'$).
When collecting the $j^{\text{th}}$ ($j \le \frac{k}{8}$) distinct page from $Q_1(\phi') \cup \{\frac{k}{2}, \dots, m\}$ in phase $\phi$ (with $j - 1$ different pages $R$ already drawn from $Q_1(\phi') \cup \{\frac{k}{2}, \dots, m\}$), the probability to collect a (clean) page among $Q_1(\phi')$ is 
\begin{eqnarray}
    \frac{\sum_{i \in Q_1(\phi') \cup \{\frac{k}{2}, \dots, m\} - R} p_i}{\sum_{i \in Q_1(\phi') - R} p_i + \sum_{i = \frac{k}{2} + 1}^m p_i} 
    \ge \frac{p[\frac{k}{2} - (j - 1):\frac{k}{2}] + p[\frac{k}{2} + k:m]}{p[\frac{k}{2} - (j - 1):\frac{k}{2}] + p[\frac{k}{2} + 1:m]} \ge \frac{p[\frac{3k}{8}:\frac{k}{2}] + p[\frac{3k}{2}:m]}{p[\frac{3k}{8}:m]}. \nonumber
\end{eqnarray}
(the first inequality is due to the fact that $\sum_{i \in R} p_i \le \sum_{i = 1}^{j - 1} p_i$ and $\frac{a - c}{b - c} \le \frac{a}{b}$ for $0 < c < a < b$).
Therefore, the expected clean pages number among the pages drawn from $Q_1(\phi') \cup \{\frac{k}{2} + 1, \dots, m\}$ in phase $\phi$, is at least 
\begin{equation*}
    \sum_{j = 1}^{\frac{k}{8}} \frac{p[\frac{3k}{8}:\frac{k}{2}] + p[\frac{3k}{2}:m]}{p[\frac{3k}{8}:m]} = \frac{p[\frac{3k}{8}:\frac{k}{2}] + p[\frac{3k}{2}:m]}{p[\frac{3k}{8}:m]} \cdot \frac{k}{8}.
\end{equation*}

\noindent Now we consider the case $|Q_1(\phi')| < \frac{k}{4}$ (i.e., more than $\frac{k}{4}$ pages from $\{1, \dots, \frac{k}{2}\}$ are drawn in the previous phase $\phi'$). 
In other words, less than $k - \frac{k}{4} = \frac{3k}{4}$ pages are drawn among $\{\frac{k}{2} + 1, \dots, m\}$ in the previous phase $\phi'$. 
As a result, when collecting the $j^{\text{th}}$ ($j \le \frac{k}{8}$) distinct page from $\{\frac{k}{2}, \dots, m\}$ in phase $\phi$ (with $j - 1$ different pages $R$ already drawn from $\{\frac{k}{2}, \dots, m\}$), the probability to collect a clean page is at least 
\begin{eqnarray}
    \frac{\sum_{i \in \{\frac{k}{2} + 1, \dots, m\} - P_2(\phi') - R} p_i}{\sum_{i = \frac{k}{2} + 1}^m p_i} 
    \ge \frac{p[\frac{k}{2} + \frac{3k}{4} + \frac{k}{8}:m]}{p[\frac{k}{2}:m]} = \frac{p[\frac{11k}{8}:m]}{p[\frac{k}{2}:m]}. \nonumber
\end{eqnarray}
In this way, the expected number of clean pages among the pages drawn from $\{\frac{k}{2} + 1, \dots, m\}$ in phase $\phi$, is at least $$\sum_{j = 1}^{\frac{k}{8}} \frac{p[\frac{11k}{8}:m]}{p[\frac{k}{2}:m]} = \frac{p[\frac{11k}{8}:m]}{p[\frac{k}{2}:m]} \cdot \frac{k}{8}.$$
In summary, $\E_{\phi}[s(\phi) \,|\, P(\phi')]$ is thus at least
\begin{equation*}
    \min\Big\{\frac{p[\frac{3k}{8}:\frac{k}{2}] + p[\frac{3k}{2}:m] }{p[\frac{3k}{8}:m]}, \frac{p[\frac{11k}{8}:m]}{p[\frac{k}{2}:m]}\Big\}  \cdot \frac{k}{8},
\end{equation*}
which concludes the lemma. 
\end{proof}

\section{The power-law case}
\label{section-power-law}
Recall that the web requests from an app or a website generally follow power-law distribution \cite{breslau1999web}.
Since it is popular to apply LRU or LFU (or hybrid in both of them) to maintain the cache and this type of caching replacement policies perform well in practice, our goal is to provide a theoretical explanation of this observation. 
Thanks to Theorem \ref{theorem-main-3} and \ref{theorem-main-4}, we show that $\roe(\lru) = O(1)$ and $\roe(\plfu) = O(1)$ when the random sequence follows a power-law distribution and the number of all pages is at least twice the cache size (i.e., $m \ge 2k$).\footnote{It indeed makes sense to assume $m \ge 2k$ since the web page number is way larger than the cache size in practice. }\\

\noindent {\bf Definition of power-law distribution. }
Given a power-law distribution $\pi$, by sorting all the $m$ pages in the decreasing order of their probabilities, we have
\begin{equation*}
    p_i = \frac{1}{L(\alpha, m)} \cdot \frac{1}{i^\alpha}
\end{equation*}
for each $i \in \{1, 2, \dots, m\}$.
Here,
\begin{equation*}
    L(\alpha, m) := \sum_{i = 1}^m \frac{1}{i^\alpha} = 
    \begin{cases}
        O(1) & \textit{ when } \alpha > 1;\\
        \Theta(\log m) & \textit{ when } \alpha = 1;\\
        \Theta(m^{1 - \alpha}) & \textit{ when } \alpha \in (0, 1).
    \end{cases}
\end{equation*}

\noindent {\bf LRU and PLFU achieves O(1) ratio-of-expectations in the power-law case. }
Thanks to Theorem \ref{theorem-main-3} and \ref{theorem-main-4}, we only need to show that 
\begin{equation*}
    \min\Big\{\frac{p[\frac{3k}{8}:\frac{k}{2}] + p[\frac{3k}{2}:m] }{p[\frac{3k}{8}:m]}, \frac{p[\frac{11k}{8}:m]}{p[\frac{k}{2}:m]}\Big\} := Formula
\end{equation*}
can be lower bounded by a function only depending on $\alpha$ (seen as a constant) but not $m$ or $k$. 
Indeed this is true. 
Note that given $m \ge 2k$,
\begin{equation*}
    \frac{p[\frac{3k}{8}:\frac{k}{2}] + p[\frac{3k}{2}:m] }{p[\frac{3k}{8}:m]} \ge 1 - \frac{p[\frac{k}{2}:\frac{3k}{2}]}{p[\frac{3k}{8}:2k]}.
\end{equation*}
When $\alpha > 1$, we have 
\begin{equation*}
    1 - \frac{p[\frac{k}{2}:\frac{3k}{2}]}{p[\frac{3k}{8}:2k]} \ge 1 - \frac{2^{\alpha - 1} - (\frac{2}{3})^{\alpha - 1}}{(\frac{8}{3})^{\alpha - 1} - (\frac{1}{2})^{\alpha - 1}} \text{ and }
    \frac{p[\frac{11k}{8}:m]}{p[\frac{k}{2}:m]} \ge \frac{p[\frac{12k}{8}:2k]}{p[\frac{k}{2}:2k]} \ge \frac{(\frac{4}{3})^{\alpha - 1} - 1}{4^{\alpha - 1} - 1},
\end{equation*}
which implies 
\begin{equation*}
    Formula \ge \min\bigg\{1 - \frac{2^{\alpha - 1} - (\frac{2}{3})^{\alpha - 1}}{(\frac{8}{3})^{\alpha - 1} - (\frac{1}{2})^{\alpha - 1}}, \frac{(\frac{4}{3})^{\alpha - 1} - 1}{4^{\alpha - 1} - 1}\bigg\}.
\end{equation*}
When $\alpha = 1$, we have 
\begin{equation*}
    1 - \frac{p[\frac{k}{2}:\frac{3k}{2}]}{p[\frac{3k}{8}:2k]} \ge 1 - \frac{\log 3}{\log \frac{16}{3}} \text{ and }
    \frac{p[\frac{11k}{8}:m]}{p[\frac{k}{2}:m]} 
    \ge 1 - \frac{p[\frac{k}{2}:\frac{11k}{8}]}{p[\frac{k}{2}:2k]} 
    \ge 2 - \frac{\log 11}{\log 4}, 
\end{equation*}
which implies $Formula \ge 2 - \frac{\log 11}{\log 4} > 1 - \frac{\log 3}{\log 4}$.\\

\noindent When $\alpha \in (0, 1)$, we have
\begin{equation*}
    1 - \frac{p[\frac{k}{2}:\frac{3k}{2}]}{p[\frac{3k}{8}:2k]} \ge 1 - \frac{3^{1 - \alpha} - 1}{4^{1 - \alpha} - 1} \text{ and }
    \frac{p[\frac{11k}{8}:m]}{p[\frac{k}{2}:m]} \ge 1 - \frac{p[\frac{k}{2}:\frac{11k}{8}]}{p[\frac{k}{2}:2k]} > 1 - \frac{3^{1-\alpha} - 1}{4^{1-\alpha} - 1}, 
\end{equation*}
which implies $Formula \ge 1 - \frac{3^{1-\alpha} - 1}{4^{1-\alpha} - 1}$.

\section{The multi-core power-law case}
\label{section-multi-core}
Now we show that LRU and PLFU achieve constant ratio-of-expectations (i.e., only depending on the power-law parameter $\alpha$) in the multi-core power-law case, where the probability $\tilde{p}_i$ that a page $i$ is requested (in at least one of the $\kappa$ sequences) at any time is equal to 
\begin{equation*}
    \tilde{p}_i = C \cdot ( 1 - (1 - p_i)^\kappa) \text{\  with\  } p_i = \frac{1}{L(\alpha, m)} \cdot \frac{1}{i^\alpha},
\end{equation*}
where $C$ is a normalization constant. \\

\noindent On one hand, recall that the cost produced by any core in a complete phase is no more than $k$ (see Proposition \ref{propo-marking-cost}). On the other hand, according to Lemma \ref{expected-clean-pages}, the expected number of clean pages in a complete phase is at least 
\begin{equation*}
    \min\Big\{\frac{\tilde{p}[\frac{3k}{8}:\frac{k}{2}] + \tilde{p}[\frac{3k}{2}:m] }{\tilde{p}[\frac{3k}{8}:m]}, \frac{\tilde{p}[\frac{11k}{8}:m]}{\tilde{p}[\frac{k}{2}:m]}\Big\}  \cdot \frac{k}{8}.
\end{equation*}
Again, $\min\Big\{\frac{\tilde{p}[\frac{3k}{8}:\frac{k}{2}] + \tilde{p}[\frac{3k}{2}:m] }{\tilde{p}[\frac{3k}{8}:m]}, \frac{\tilde{p}[\frac{11k}{8}:m]}{\tilde{p}[\frac{k}{2}:m]}\Big\}$ is a function only depending on $\alpha$ 
(and not depending on $k$, $m$ and $\kappa$) -- this is indeed true, since $$\kappa \cdot x \ge 1 - (1 - x)^\kappa \ge \frac{\kappa \cdot x}{1 + \kappa \cdot x} \ge \frac{\kappa \cdot x}{2}$$ for any $0 \le x \le 1 / \kappa$ (see Appendix \ref{appendix-multi-cores} for the proof of this inequality), we have
\begin{eqnarray}
    \min\Big\{\frac{\tilde{p}[\frac{3k}{8}:\frac{k}{2}] + \tilde{p}[\frac{3k}{2}:m] }{\tilde{p}[\frac{3k}{8}:m]}, \frac{\tilde{p}[\frac{11k}{8}:m]}{\tilde{p}[\frac{k}{2}:m]}\Big\}
    &\ge& \min\bigg\{\frac{\frac{\kappa}{2} \cdot (p[\frac{3k}{8}:\frac{k}{2}] + p[\frac{3k}{2}:m]) }{\kappa \cdot p[\frac{3k}{8}:m] }, \frac{\frac{\kappa}{2} \cdot p[\frac{11k}{8}:m] }{\kappa \cdot p[\frac{k}{2}:m] }\bigg\} \nonumber \\
    &=& \min\Big\{\frac{p[\frac{3k}{8}:\frac{k}{2}] + p[\frac{3k}{2}:m] }{p[\frac{3k}{8}:m]}, \frac{p[\frac{11k}{8}:m]}{p[\frac{k}{2}:m]}\Big\} \cdot \frac{1}{2}. \nonumber
\end{eqnarray}
By inequality (\ref{deriving-roe-original}), we thus have $\roe(\lru) = O(1)$ and also $\roe(\plfu) = O(1)$.

\section{Concluding remarks}
\label{section-conclusion}
In this paper, we study the online paging problem in the stochastic input model. First, we observe that in modern cloud systems Pareto type II distributions are present, whereas previously only Pareto type I distributions were reported. We provide an model explaining this behaviour which we call multi-core power-law. In our opinion, this indicates that stochastic data for testing caching strategies shall include this more general model. Second, we provide theoretical explanation for the good performance of LRU and LFU-based online paging algorithms for data following power-law and multi-core power-law distribution. Up to our knowledge this is the first known throughout proof of this fact for these types of stochastic distributions.

We conclude our work by discussing some lower bounds on the ratio-of-expectations for stochastic online paging and some future directions.

\paragraph{$O(\log k)$ cannot be further improved even in the stochastic input model. }
In the arbitrary input model, the randomized online algorithm $\marker$ is $O(\log k)$-competitive and no online algorithm can achieve a competitive ratio less than $H_k = \Omega(\log k)$ \cite{fiat1991competitive}.
In other words, $O(\log k)$ is the best performance ratio for an online paging algorithm in the adversarial model.

In the stochastic input model, does there exist an online paging algorithm achieving ratio-of-expectation better than $O(\log k)$?
Unfortunately, $O(\log k)$ remains the best performance ratio for an online paging algorithm: when the number of all pages is $k + 1$ and the distribution is uniform (i.e., $m = k + 1$ and $p_1 = p_2 = \dots = p_k = p_{k+1} = \frac{1}{k + 1}$), any online paging algorithm $\algor$ achieves $\roe(\algor) = H_k = \Theta(\log k)$. 

To explain this, we again divide any given random sequence into phases recursively using the method when defining the class of marking algorithms.
In this way, each phase terminates right before seeing $k + 1$ pages, i.e., all pages.
Note that the offline optimal solution $\optim$ only produces a cost of 1 at the beginning of each phase. 
However, in the stochastic online situation, any paging algorithm produces a cost rate of $\frac{1}{k + 1}$ to deal with each request.
On the other hand, similar as for the Coupon Collector's problem, the expected length of a complete phase is exactly 
\begin{equation*}
1 + \frac{1}{1 - \frac{1}{k + 1}} + \frac{1}{1 - \frac{2}{k + 1}} + \dots + (\frac{1}{1 - \frac{k}{k + 1}} - 1) = (k + 1) \cdot H_k.
\end{equation*}
This is because the probability to collect the $i^{\text{th}}$ new page in a complete phase is equal to $1 - \frac{i - 1}{k + 1}$ (since $i - 1$ pages are already at hand), and according to geometrical distribution property, the expected length to collect the $i$th new page becomes $\frac{1}{1 - \frac{i - 1}{k + 1}}$. 
Since each phase terminates right before seeing all the $k + 1$ different pages, we thus have $k + 1$ terms in the above summation. 
As a result, the expected cost of any algorithm $\algor$ in a complete phase is $\frac{1}{k + 1} \cdot (k + 1) \cdot H_k = H_k$, which establishes $\roe(\algor) = H_k = O(\log k)$. 

\paragraph{$\lru$ can be $O(\log k)$ times worse than $\plfu$. }
To prove this fact, consider when the random sequence follows a particular distribution (the number of pages is $k + 1$ with their probabilities $p_1 = \dots = p_k = \frac{1 - \varepsilon}{k}$ and $p_{k+1} = \varepsilon$, $\varepsilon = \frac{1}{k^3}$ is a very small value).
In this case, $\roe(\lru) = \Omega(\log k)$. 
However, $\roe(\plfu) < 4$, thanks to Theorem \ref{theorem-main-1}.
This means, $\lru$ can be $O(\log k)$ times worse than $\plfu$ in the stochastic model.

To prove $\roe(\lru) = \Omega(\log k)$, we again divide any given random sequence into phases recursively (such that each phase terminates right before seeing $k + 1$ different pages and the next phase starts with such ($k + 1$)-th page being the first page to see). Since $\optim$ only produces a cost of 1 at the beginning of each phase, we thus only need to show that the expected number of page faults produced by $\lru$ during two consecutive complete phases is $\Omega(\log k)$. 

Given the (random) sequence, let $\phi_1\phi_2$ denote any two consecutive complete phases. 
Let $t_0$ denote the first time in $\phi_1\phi_2$ when the small page $k + 1$ is requested. 
By phase definition, all the pages need to be requested within $\phi_1\phi_2$ - as such, the small page $k + 1$ is either requested during $\phi_1$ or becomes the first request of $\phi_2$.
Let $t_i \ge t_0$ denote the first time after $t_0$ when $i$ big pages have been requested. 
Again, by phase definition, each $t_i$ ($i \in \{1, \dots, k - 1\}$) happens during $\phi_1\phi_2$. 
Since there are in total $k + 1$ pages but the cache size is only $k$, at any time $\in [t_0, t_{k-1}]$ there is exactly one page not in the cache. 
Such a page is referred to as the \emph{missing} page. Note that $\lru$ faces a page fault whenever the missing page is requested. 
Besides, during $[t_0, t_{k-1}]$, according to $\lru$'s definition, the small page $k + 1$ resides in the cache - this is because, such a small page is placed in the cache at time $t_0$ and for every page fault arising during $[t_0, t_{k - 1}]$, the least recently used page (to be evicted) must be a big page. 
Note that, for any $1 \le i \le k$, the probability that the big page requested at time $t_i$ being the missing page (and hence $\lru$ faces a page fault), is 
\begin{equation*}
    \frac{\frac{1 - \varepsilon}{k}}{(1 - \varepsilon) - (i - 1) \cdot \frac{1 - \varepsilon}{k}} = \frac{1}{k - (i - 1)}.
\end{equation*}
By linearity of expectation, the total number of page faults during the time interval $[t_0, t_{k-1}]$ is equal to 
\begin{equation*}
    \sum_{i=1}^{k-1} \frac{1}{k+1-i} = \sum_{i=1}^{k-1} \frac{1}{i} = H_{k - 1} = \Omega(\log k).
\end{equation*}

\paragraph{Future directions.}

One future direction is to study a more complicated but practical stochastic model following power-law distribution, for example, the Markov paging model \cite{karlin2000markov} (where the probability to request a page depends on the previously requested page).
Another future direction is to consider a prefetching problem with stochastic input (especially when the requests follow power-law distribution).
That is, right after dealing with each request and before the arrival of the next request, it is allowed to swap $p \le k$ pages inside the cache with no cost incurred --- this is indeed witnessed nowadays in the multi-cores caching system, one core processing the requests and the other core working on prefetching (according to the recommendations learned from the history data). Designing a ``good'' prefetching algorithm is challenging both from theoretical and practical point of view.

\bibliographystyle{abbrv}
\bibliography{myreference}

\appendix

\section{Proof of Theorem \ref{theorem-main-1}}
\label{appendix-khp-main}
Now we prove Theorem \ref{theorem-main-1}: 
\begin{equation*}
    \roe(\plfu) \le 2 \cdot \sum_{i=2}^{k+1} \frac{p_{k+1}}{p[i:k+1]}.
\end{equation*}
In order to do so, we come up with a new way of defining phases, particularly for PLFU, which is different from the method of defining  phases for the class of marking algorithms.

Recall that the pages in $\M$ are sorted in the decreasing order of their probabilities in the given distribution $\pi$. The pages $1, 2, \ldots, k$ are called big pages, and the pages $k + 1, \dots, m$ are called small pages. Given any length-$n$ sequence of page requests $\sigma$, it is recursively partitioned into \emph{phases} as follows. 
\begin{itemize}
    \item[-] Each phase starts one time step after the previous one ends.
    \item[-] Each phase ends once all the big pages have been requested (during that phase), or when reaching the last request of the request sequence $\sigma$. 
\end{itemize}
A phase is \emph{complete} if all the big pages have been requested during that phase.
Note that (i) only the last phase of the sequence may be incomplete; (ii) the last page requested during a complete phase must be a big page. 
We consider any complete phase $\phi \subset \sigma$ and assume that it starts at time $t_0$ and ends at time $t_1$. 
We use $\sigma_t$ to denote the page requested at time t.
Let $s(\phi)$ denote the number of different small pages requested during phase $\phi$, i.e., 
\begin{equation*}
    s(\phi):= |\{i \mid \sigma_t = i, i > k, t_0\le t \le t_1\}|
\end{equation*}
and let $f(\phi)$ denote the number of times a small page is requested during phase $\phi$, i.e., 
\begin{equation*}
    f(\phi):= |\{t \mid \sigma_t > k, t_0\le t \le t_1\}|.
\end{equation*}
We have the following observation.
\begin{lemma}
$\plfu$ produces a cost of $2f(\phi)$ during any complete phase $\phi$, while any paging algorithm produces a cost at least $s(\phi)$ during such a phase $\phi$. 
\label{obs:fs}
\end{lemma}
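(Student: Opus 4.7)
The plan is to handle the two assertions separately. For the upper bound on $\plfu$'s cost, the key observation is that $\plfu$ always maintains exactly the $k$ most frequent pages (the big pages) inside its cache, since it counts frequencies over the entire set $\M$ and $p_1 \ge \dots \ge p_k > p_{k+1} \ge \dots$ means the $k$ most frequent pages are precisely $\{1, \dots, k\}$. Hence requesting a big page never incurs a page fault, while each request to a small page triggers a page fault contributing cost $2$ (one unit to evict page $k$, one unit to bring it back right after). Summing over the $f(\phi)$ small-page requests in $\phi$ yields a cost of exactly $2f(\phi)$ for $\plfu$ in phase $\phi$.

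For the lower bound, I would use a simple pigeonhole argument on the number of distinct pages appearing in $\phi$. Since $\phi$ is complete, every one of the $k$ big pages is requested at least once, and by definition $s(\phi)$ distinct small pages are also requested, so exactly $k + s(\phi)$ distinct pages appear during $\phi$. Any paging algorithm's cache at the beginning of phase $\phi$ holds at most $k$ pages. For each distinct page of $\phi$ that is not in the starting cache configuration, the algorithm must incur at least one page fault the first time the page is requested during $\phi$. Thus the number of page faults during $\phi$ is at least $(k + s(\phi)) - k = s(\phi)$, and since each page fault contributes at least a cost of $1$, the total cost is at least $s(\phi)$.

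No step looks particularly delicate here; the only thing to be slightly careful about is ensuring the bound is stated per phase (not per pair of phases, as in the marking-based analysis of Section \ref{section-clean-pages}), since here a complete phase is self-contained: it already witnesses $k + s(\phi)$ distinct pages on its own, without needing to peek into the previous phase. I would state both assertions as separate short paragraphs inside a single proof block, without invoking any external machinery.
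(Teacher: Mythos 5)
Your proof is correct and follows essentially the same route as the paper: the lower bound is the identical pigeonhole argument ($k+s(\phi)$ distinct pages in a complete phase versus a cache of size $k$), and the $2f(\phi)$ claim is the same direct unpacking of the PLFU definition that the paper leaves implicit. The only cosmetic caveat is that you need not assume $p_k > p_{k+1}$ strictly; the paper simply fixes the big pages to be $\{1,\dots,k\}$ under an arbitrary tie-breaking of the sorted probabilities.
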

\begin{proof}
Notice that exactly $k + s(\phi)$ different pages have been requested during phase $\phi$: $k$ big and $s(\phi)$ small. Since the cache can contain at most $k$ different pages, for any paging algorithm $\algor$ (even offline), irrespective of the cache configuration before the start of a phase $\phi$, $\algor$ must have faced at least $$k + s(\phi) - k = s(\phi)$$ evictions during such a phase. 
\end{proof}

Let $\phi$ be a complete phase, and $s$ be an integer with $s \in \{1, 2, \dots, m - k\}$. 
Let $\E_{\phi}[f(\phi) \,|\, s(\phi)=s]$ denote the expected value of $f(\phi)$ on a complete random phase\footnote{the probability of a given complete phase corresponds to its frequency in an infinite sequence $\sigma$. } $\phi$ with $s$ small pages. 
To derive $\roe(\plfu) \le 2 \cdot \sum_{i=2}^{k+1} \frac{p_{k+1}}{p[i:k+1]}$, we only need to show that 
\begin{equation*}
    \frac{\E_{\phi}[f(\phi) \,|\, s(\phi) = s]}{s} \le \sum_{i = 2}^{k + 1} \frac{p_{k+1}}{p[i:k+1]}
\end{equation*}
for all $s \in \{1, 2, \dots, m - k\}$.  
This is because, denoting by $\sigma \sim \ell$-CP if $\sigma$ is a random sequence that consists of $\ell$ complete phases, we have
\begin{eqnarray} \label{inequality-main-1}
    \roe(\plfu)
    &=& \overline{\displaystyle\lim_{n \to \infty}} \frac{\E_{\sigma}^n[\plfu(\sigma)]}{\E_{\sigma}^n[\optim(\sigma)]} = \overline{\displaystyle\lim_{\ell \to \infty}} \frac{\E_{\sigma}^{\ell\text{-}\cp}[\plfu(\sigma)]}{\E_{\sigma}^{\ell\text{-}\cp}[\optim(\sigma)]} \nonumber\\
    &\le& 2 \cdot \overline{\displaystyle\lim_{\ell \to \infty}} \frac{\E_{\sigma}^{\ell\text{-}\cp}[\sum_\phi^{\ell} f(\phi)]}{\E_{\sigma}^{\ell\text{-}\cp}[\sum_\phi^{\ell} s(\phi)]} = 2 \cdot \frac{\E_{\phi}[f(\phi)]}{\E_{\phi}[s(\phi)]} \nonumber\\
    &=& 2 \cdot \frac{\sum_{s = 1}^{m - k} \E_{\phi}[f(\phi) \,|\, s(\phi)=s] \cdot \prob(s(\phi) = s)}{\sum_{s = 1}^{m - k} s(\phi) \cdot \prob(s(\phi) = s)}. 
\end{eqnarray}
Here the first inequality follows from Lemma \ref{obs:fs} on phase decomposition for the random sequences generated according to the distribution $\pi$. 

Now, we show that for all $s \in \{1, 2, \dots, m - k\}$: 
\begin{equation*}
    \frac{\mathbb{E}[f(\phi) \,|\, s(\phi) = s]}{s} \le \sum_{i = 2}^{k+1} \frac{p_{k+1}}{p[i:k+1]}.
\end{equation*}
In order to do so, we need the following lemma (proof provided later). 
\begin{lemma}
The expected number of times that a small page $j > k$ is requested during a random phase is at most $$1 + \sum_{i=2}^{k} \frac{p_j}{p[i:k] + p_j}.$$
\label{lemma:bounded_ratio}
\end{lemma}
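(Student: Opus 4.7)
The plan is to prove this by induction on the number $k$ of big pages. The base case $k = 1$ is immediate: the phase terminates at the first appearance of page $1$, and the count of preceding $j$-requests is geometric with mean $p_j/p_1 \le 1$ (since $p_j \le p_1$), matching the claimed bound of $1$.

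For the inductive step, assume the bound for $k-1$ and condition on $b$, the random first of the $k$ big pages to appear in the phase. The initial sub-segment before any big page appears contributes $p_j/p[1:k]$ $j$-requests in expectation (a standard geometric computation), and, conditioned on $b$ (which has probability $p_b/p[1:k]$ of being the first big page), the remainder of the phase has exactly the distribution of a fresh phase on the reduced big-page set $[k]\setminus\{b\}$. Denoting its expected $j$-count by $\E[N_j^{(b)}]$, we have
\begin{equation*}
    \E[N_j] = \frac{p_j}{p[1:k]} + \sum_{b=1}^{k} \frac{p_b}{p[1:k]}\, \E[N_j^{(b)}].
\end{equation*}

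The crux is a monotonicity claim: $\E[N_j^{(b)}] \le \E[N_j^{(1)}]$ for every $b$. Via the Wald-type identity $\E[N_j^{(b)}] = p_j \cdot \E[T_b]$, where $T_b$ is the coupon-collection time for the big pages $[k]\setminus\{b\}$, it suffices to show $\E[T_b] \le \E[T_1]$. After sorting, the probability vector on $[k]\setminus\{b\}$ dominates that on $[k]\setminus\{1\}$ coordinate-wise (because $p_1 \ge p_b$), so a Poisson-thinning coupling of each big page's first-appearance time yields $T_b \le T_1$ pointwise. Plugging this in, $\E[N_j] \le p_j/p[1:k] + \E[N_j^{(1)}]$. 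Applying the inductive hypothesis to $\E[N_j^{(1)}]$, whose probabilities are already sorted as $p_2 \ge \dots \ge p_k$, yields $\E[N_j^{(1)}] \le 1 + \sum_{i=3}^k p_j/(p[i:k]+p_j)$ after reindexing. Finally, $p_j \le p_1$ implies $p_j/p[1:k] \le p_j/(p[2:k]+p_j)$, which supplies the missing $i=2$ term and produces the claimed bound $1 + \sum_{i=2}^k p_j/(p[i:k]+p_j)$.

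The main obstacle will be the monotonicity step, because the two sub-problems being compared remove different big pages and are thus not coupled in an obvious direct way; the Poisson-thinning argument above is what makes this transparent. A side benefit of this route is that the inductive hypothesis needs only to be applied to one ``worst-case'' sub-problem (the removal of page $1$), rather than separately to each $\E[N_j^{(b)}]$, which would force a fiddly bookkeeping over how the sorted tails $p'[i:k-1]$ depend on which page $b$ was removed.
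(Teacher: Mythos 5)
Your proof is correct in substance, but it takes a genuinely different route from the paper's. The paper avoids induction entirely: it splits the phase at the times $t_0 < t_1 < \dots < t_k$ at which the $i^{\text{th}}$ distinct big page first appears, observes that the number of $j$-requests in $[t_i, t_{i+1}]$ is geometric with mean $p_j / \sum_{i' \in B_i} p_{i'}$ where $B_i$ is the set of $k-i$ not-yet-seen big pages, and then bounds the denominator deterministically by $\sum_{i' \in B_i} p_{i'} \ge p[i+2:k] + p_j$ (the worst case being that $B_i$ consists of the least probable big pages, together with $p_{i+1} \ge p_j$). Summing over $i$ gives the claimed bound in one pass, with no coupling and no stochastic comparison. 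Your induction reproduces the same telescoping sum --- your initial term $p_j/p[1:k]$ is exactly the paper's $\E[O_0]$ --- but routes the worst-case identification through the comparison $\E[T_b] \le \E[T_1]$ of coupon-collector times. Two points in your argument need more care than you give them. First, the inductive hypothesis must be stated for an \emph{arbitrary} target set of pages each with probability at least $p_j$, not for the top-$k$ pages of the universe: after conditioning on $b=1$, the residual target set $\{2,\dots,k\}$ is not the set of most probable pages (page $1$ is still in the universe), so the lemma as literally stated does not apply to the subproblem. You implicitly use the generalized statement, and the bound indeed depends only on the target probabilities, but this must be made explicit. Second, the claim that a thinning coupling yields $T_b \le T_1$ \emph{pointwise} in discrete time is overclaimed: the first-appearance times of distinct pages are dependent in the multinomial model, and thinning alters their joint law. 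What you actually need is only $\E[T_b] \le \E[T_1]$, which does hold --- e.g., via Poissonization (where $\E[\tau_S] = \int_0^\infty \bigl(1 - \prod_{s\in S}(1-e^{-p_s t})\bigr)\,dt$ is manifestly non-increasing in each $p_s$), or via inclusion-exclusion in discrete time --- combined with Wald's identity $\E[N_j^{(b)}] = p_j\,\E[T_b]$. With those two repairs your argument is complete; it is simply heavier machinery than the paper's direct segment-by-segment computation.
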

By Lemma \ref{lemma:bounded_ratio}, for any small page $j$ requested at least once during a random complete phase $\phi$, it is requested for at most
\begin{equation*}
    1 + \sum_{i=2}^{k} \frac{p_j}{p[i:k] + p_j} \le 1 + \sum_{i=2}^{k} \frac{p_{k+1}}{p[i:k] + p_{k+1}} \le \sum_{i=2}^{k+1} \frac{p_{k+1}}{p[i:k+1]}
\end{equation*}
times.
Therefore, during a phase $\phi$ with $s$ different small pages requested, the expected number of the requests on these $s$ small pages is upper bounded by
\begin{equation*}
    \E[f(\phi) \mid s(\phi) = s] \le s \cdot \sum_{i=2}^{k+1} \frac{p_{k+1}}{p[i:k+1]}. 
\end{equation*}
By inequality (\ref{inequality-main-1}), we conclude Theorem \ref{theorem-main-1}. 

\begin{proof}(Lemma \ref{lemma:bounded_ratio})
Consider any small page $j > k$ in a random complete phase. 
Let $t_0 = 0$ and let $t_i$ ($i \in \{1, \dots, k\}$) denote the first time when $i$ different big pages have been requested. 
Let $B_i$ denote the set of big pages that have not yet been requested at time $t_i$ (obviously $|B_i| = k - i \ge 1$ for each $i \in \{0, \dots, k - 1\}$). 
Let $O_i$ denote the number of small pages $j$ being requested during time interval $T_i = [t_i, t_{i+1}]$. 
For any $i \in \{0, \dots, k - 1\}$, the random variable $O_i$ follows a geometric distribution of success probability parameter equal to 
\begin{equation*}
    \dfrac{\sum_{i'\in B_i}p_{i'}}{p_j+\sum_{i'\in B_i}p_{i'}}
\end{equation*}
which represents the probability that one of the not-yet-requested big pages $B_i$ is requested before the small page $j$. 
As such, the expected number of page $j$'s occurrences during $T_i$ is
\begin{align*}
    \E[O_i] &= \sum_{l \ge 0} l \cdot \bigg(1 - \dfrac{\sum_{i'\in B_i}p_{i'}}{p_j+\sum_{i'\in B_i}p_{i'}}\bigg)^l \cdot \dfrac{\sum_{i'\in B_i}p_{i'}}{p_j+\sum_{i'\in B_i}p_{i'}} 
    = \dfrac{p_j + \sum_{i'\in B_i}p_{i'}}{\sum_{i'\in B_i}p_{i'}} - 1 
    = \dfrac{p_j}{\sum_{i'\in B_i}p_{i'}}.
\end{align*}
Recall that $p_1 \ge p_2 \ge \dots \ge p_k \ge p_j$ and $|B_i| = k - i$.
We thus have $$\sum_{i'\in B_i} p_{i'} \ge p_{i+2} + \dots + p_k + p_j$$ for each $i \in \{0, \dots, k - 2\}$ and $$\sum_{i'\in B_{k-1}} p_{i'} \ge p_j.$$
Finally, what we intend to upper bound, i.e., the number of occurrences of the small page $j$ during the phase, is equal to 
\begin{equation*}
    \E[O_0 + \dots + O_{k-1}] \le 1 + \sum_{i = 2}^{k} \frac{p_j}{p[i:k] + p_j}.
\end{equation*}
In this way, we have this lemma. 
\end{proof}

\noindent With the help of Theorem \ref{theorem-main-1}, we have 
\begin{corollary}
    $\roe(\plfu) \le 2 H_{k+1} = O(\log k)$.
\end{corollary}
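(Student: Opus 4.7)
The plan is to bound each summand $\frac{p_{k+1}}{p[i:k+1]}$ in Theorem~\ref{theorem-main-1} by a simple harmonic term, using only the fact that the probabilities are sorted in decreasing order, and then sum the resulting telescoping-like series.

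First I would observe that, by the sorting convention $p_1 \ge p_2 \ge \dots \ge p_k \ge p_{k+1}$, every term in the sum $p[i:k+1] = p_i + p_{i+1} + \dots + p_{k+1}$ is at least $p_{k+1}$. Since this sum contains $k - i + 2$ terms, we have
\begin{equation*}
p[i:k+1] \ge (k - i + 2)\, p_{k+1}, \qquad \text{for } 2 \le i \le k+1,
\end{equation*}
and therefore
\begin{equation*}
\frac{p_{k+1}}{p[i:k+1]} \le \frac{1}{k - i + 2}.
\end{equation*}

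Next I would substitute this bound into the inequality of Theorem~\ref{theorem-main-1} and reindex the resulting sum via $j = k - i + 2$, which runs over $j = 1, 2, \dots, k$ as $i$ runs over $k+1, k, \dots, 2$. This yields
\begin{equation*}
\sum_{i=2}^{k+1} \frac{p_{k+1}}{p[i:k+1]} \le \sum_{i=2}^{k+1} \frac{1}{k-i+2} = \sum_{j=1}^{k} \frac{1}{j} = H_k \le H_{k+1}.
\end{equation*}
Multiplying by $2$ gives $\roe(\plfu) \le 2 H_{k+1} = O(\log k)$, as claimed.

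There is essentially no obstacle here: the only ingredient beyond Theorem~\ref{theorem-main-1} is the monotonicity of the probabilities, which is built into the indexing, and a simple harmonic-sum identification after a trivial change of variable. The slight looseness between $H_k$ and $H_{k+1}$ in the stated corollary is absorbed trivially since $H_k \le H_{k+1}$.
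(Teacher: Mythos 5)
Your proposal is correct and is exactly the argument the paper implicitly relies on: it states the corollary directly after Theorem \ref{theorem-main-1} with no further proof, so the intended derivation is precisely your bound $p[i:k+1] \ge (k-i+2)\,p_{k+1}$ followed by the harmonic-sum reindexing. In fact your computation gives the marginally sharper bound $2H_k$, which is then relaxed to $2H_{k+1}$ as stated.
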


\section{Proof of Theorem \ref{theorem-main-2}}
\label{appendix-general-costrate}
Now we prove Theorem \ref{theorem-main-2}: 
\begin{equation*}
    \roe(\plfu) \le \frac{8}{p[k+1:m]} - 4.
\end{equation*}
To establish this upper bound, we use the {\em cost rate} of a paging algorithm $\algor$, defined as follows.
\begin{equation*}
    \costrate(\algor) = \displaystyle\overline{\lim_{n \to \infty}} \frac{\E_{\sigma}^n[\algor(\sigma)]}{n}.
\end{equation*}
From the definitions of cost rate and ratio-of-expectations, we have the following proposition.
\begin{proposition} \label{propo-roe-costrate}
Given $\costrate(\algor) \le c$ and $\costrate(\optim) \ge c'$, we have $\roe(\algor) \le \frac{c}{c'}$.
\end{proposition}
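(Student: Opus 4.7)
The plan is to unfold the two definitions and reduce everything to an elementary $\limsup/\liminf$ manipulation. For any finite horizon $n$, I would first rewrite the ratio of expectations as
\begin{equation*}
\frac{\E_\sigma^n[\algor(\sigma)]}{\E_\sigma^n[\optim(\sigma)]} = \frac{\E_\sigma^n[\algor(\sigma)]/n}{\E_\sigma^n[\optim(\sigma)]/n},
\end{equation*}
so that both the numerator and the denominator are now per-request quantities governed directly by the hypotheses on the cost rates. Taking $\limsup_{n\to\infty}$ on both sides and invoking the standard inequality $\limsup_n (a_n/b_n) \le (\limsup_n a_n)/(\liminf_n b_n)$ (valid whenever $a_n,b_n>0$ and $\liminf_n b_n > 0$) yields
\begin{equation*}
\roe(\algor) \;\le\; \frac{\limsup_n \E_\sigma^n[\algor(\sigma)]/n}{\liminf_n \E_\sigma^n[\optim(\sigma)]/n} \;\le\; \frac{c}{\liminf_n \E_\sigma^n[\optim(\sigma)]/n}.
\end{equation*}

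The only step requiring care is the denominator: the hypothesis $\costrate(\optim) \ge c'$ is stated with a $\limsup$, whereas the inequality above demands a lower bound on the $\liminf$. The main obstacle is therefore to argue that $\liminf_n \E_\sigma^n[\optim(\sigma)]/n = \limsup_n \E_\sigma^n[\optim(\sigma)]/n$, i.e., that the sequence $\E_\sigma^n[\optim(\sigma)]/n$ actually converges. I would establish this via a subadditivity argument: under i.i.d.\ draws from $\pi$, for any sequences $\sigma_1,\sigma_2$ one has $\optim(\sigma_1\sigma_2) \le \optim(\sigma_1) + \optim(\sigma_2) + O(1)$ (the extra constant accounts for reconciling the cache state between the two halves). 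Taking expectations and applying Fekete's subadditive lemma then shows that $\E_\sigma^n[\optim(\sigma)]/n$ converges to its infimum, so the $\liminf$ coincides with $\costrate(\optim) \ge c'$.

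Combining these two observations gives $\roe(\algor) \le c/c'$, as required. I expect the proof to be short — essentially two lines once the subadditivity remark is in place — since the proposition is really a bookkeeping statement linking the two performance measures defined earlier. If the paper prefers to sidestep the subadditivity argument, one could alternatively take the cost rate of $\optim$ as a defined limit (which is already implicit in the stochastic input model), in which case the proof collapses to the single display equation above with no further justification.
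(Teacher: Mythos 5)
Your proof is correct, and it is worth noting that the paper does not actually supply one: Proposition \ref{propo-roe-costrate} is stated as following ``from the definitions,'' which amounts to exactly your first display. The added value of your write-up is that you correctly flag the one non-trivial point the paper glosses over. Both $\costrate$ and $\roe$ are defined via $\limsup$, so the hypothesis $\costrate(\optim)\ge c'$ lower-bounds a $\limsup$ while the elementary bound $\limsup_n(a_n/b_n)\le(\limsup_n a_n)/(\liminf_n b_n)$ needs a lower bound on the $\liminf$ of the denominator. Your subadditivity patch closes this gap cleanly: since the requests are i.i.d., $\E_\sigma^{n+m}[\optim(\sigma)]\le \E_\sigma^{n}[\optim(\sigma)]+\E_\sigma^{m}[\optim(\sigma)]+k$ (the additive $k$ bounding the cost of reconciling cache states), so Fekete's lemma applied to $\E_\sigma^{n}[\optim(\sigma)]+k$ gives convergence of $\E_\sigma^n[\optim(\sigma)]/n$ and hence $\liminf=\limsup\ge c'$. (In the paper's actual application, Lemma \ref{lemma-costrate} lower-bounds $\optim$'s expected cost per phase uniformly, so the bound holds in the $\liminf$ sense anyway; but as a standalone statement with $\limsup$-based definitions, your extra step is genuinely needed for rigor.) The remaining hypotheses of the $\limsup/\liminf$ inequality ($c'>0$ and boundedness of $\E_\sigma^n[\algor(\sigma)]/n$) hold here, so the argument is complete.
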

Note that PLFU produces a cost rate of $2 \cdot p[k+1:m]$. 
By Proposition \ref{propo-roe-costrate}, to upper bound $\roe(\plfu)$, we only need to lower bound the cost rate of offline optimal algorithm, i.e., $\costrate(\optim)$. 
\begin{lemma} \label{lemma-costrate}
    For any paging algorithm $\algor$ (even offline), we have $$\costrate(\algor) \ge \frac{p[k+1:m]^2}{4 - 2\cdot p[k+1:m]}.$$
\end{lemma}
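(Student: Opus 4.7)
The plan is to follow the strategy already sketched in the technical overview: instead of bounding the cost of $\algor$ phase-by-phase (as we did in Section \ref{section-clean-pages} for marking algorithms and in Appendix \ref{appendix-khp-main} for PLFU), I work with the \emph{cost rate} of $\algor$ directly. Let $q := p[k+1:m]$. Since Proposition \ref{proposition-costrate-plfu}'s proof already gives $\costrate(\plfu) = 2q$, it suffices, by the elementary implication $\costrate(\algor) \le c$, $\costrate(\optim) \ge c'$ $\Rightarrow$ $\roe(\algor) \le c/c'$, to establish the asymmetric inequality $\costrate(\algor) \ge q^2/(4-2q)$ for every paging algorithm $\algor$ (including the offline optimum). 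So the task reduces to proving a universal lower bound on the cost rate of any paging algorithm for random input drawn from $\pi$.

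The main idea is to introduce a third notion of phase, tailored to lower bounding the cost rate. Concretely, I will partition the random sequence $\sigma$ into phases so that (i) each complete phase forces any paging algorithm to incur at least one page fault, and (ii) the expected length of a phase is at most $(4-2q)/q^2$. The cache-capacity constraint provides property (i): in any window of requests that contains at least $k+1$ distinct pages, any algorithm -- even offline -- must suffer at least one miss on the window (one of those $k+1$ pages cannot be in the size-$k$ cache at the time of its first in-window appearance). So the phase boundary should be triggered by an event whose occurrence guarantees $k+1$ distinct pages have been requested. A natural candidate, matching the scaling $1/q^2$, is a phase that ends when a second small page request occurs in conjunction with some triggering big-page event, so that the expected waiting time is $\Theta((1+q)/q^2)$; the precise definition must be tuned to yield exactly the $(4-2q)/q^2$ factor.

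Once the phase boundary is fixed, the cost-per-phase bound follows from the observation that, across any two consecutive complete phases, at least $k+1$ distinct pages are requested, so any algorithm must incur at least one page fault, yielding $\E_{\phi}[\algor(\phi)] \ge \frac{1}{2}$ amortized per phase (or $\ge 1$ per phase after suitable regrouping). The expected length of a phase is then computed by a standard Bernoulli argument: each request is small with probability $q$ independently, so the expected number of requests until the triggering small-page pattern is a straightforward geometric-waiting-time computation. Combining, taking $\sigma$ to be an $\ell$-phase sequence and letting $\ell \to \infty$,
\begin{equation*}
\costrate(\algor) \;=\; \overline{\lim_{n\to\infty}} \frac{\E_{\sigma}^{n}[\algor(\sigma)]}{n} \;\ge\; \frac{\E_{\phi}[\algor(\phi)]}{\E_{\phi}[L(\phi)]} \;\ge\; \frac{q^2}{4-2q}.
\end{equation*}

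The hard part, and the reason we cannot simply reuse the marking-phase or PLFU-phase decompositions, is item (i) for offline algorithms: because $\optim$ is allowed to see the entire sequence in advance, its cache state at time $t$ is correlated with $\sigma_t$, and a naive ``miss probability $\ge q$'' estimate (which works for any online algorithm by independence of $C_t$ from $\sigma_t$) is no longer valid. The whole point of the phase decomposition is to convert the per-request miss-probability argument into a window-based cache-capacity argument that does not require any independence between cache and request; the $k+1$-distinct-pages obstruction is purely combinatorial and holds for offline algorithms as well. Balancing the phase length so that the resulting cost rate bound is exactly $q^2/(4-2q)$ (rather than a loose $\Omega(q^2)$) is the main calibration step and is where the specific form of the denominator $4-2q$ will arise.
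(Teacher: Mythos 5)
Your high-level strategy is the right one and matches the paper's: introduce a new phase decomposition and replace the per-request miss-probability argument (which fails for offline algorithms) by a purely combinatorial cache-capacity obstruction. But the calibration you propose cannot work. You want each phase to force at least one page fault while having expected length at most $(4-2q)/q^2$ with $q = p[k+1:m]$. The only fault-forcing mechanism available to you --- and the only one you invoke --- is that a window containing $k+1$ distinct pages forces a miss. Any such window has length at least $k+1$, so under any phase definition with property (i) the phase length is at least $k+1$ \emph{deterministically}. Whenever $k+1 > (4-2q)/q^2$ (e.g.\ $q$ a constant and $k$ large, which is precisely the regime where Theorem \ref{theorem-main-2} is meant to give an $O(1)$ bound), properties (i) and (ii) are mutually exclusive, so no tuning of the trigger can save the plan. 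Your candidate trigger (``a second small-page request in conjunction with some big-page event'') also does not certify $k+1$ distinct pages: the two small requests may hit the same small page, and the interleaved big-page requests need not cover all $k$ big pages. A ``one fault per phase'' scheme can at best give $\costrate(\algor) \ge 1/\E[\text{time to see } k+1 \text{ distinct pages}]$, and that expectation can be far larger than $(4-2q)/q^2$ (it is at least $k+1$, and can blow up further for skewed $\pi$).

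The paper's proof avoids this by letting the number of forced faults \emph{grow with the phase length}. It defines $k'$ as the largest index with $p[1:k'] \le 1 - q/2$ and ends a phase when $k'$ distinct pages have been requested. The capacity argument then forces $k' - k$ faults per phase, and since each page beyond the first $k$ has probability at most $(1-q)/k$, one gets $k' - k \ge \frac{q}{2(1-q)}\,k$, i.e.\ $1 - k/k' \ge \frac{q}{2-q}$. On the other side, as long as fewer than $k'$ distinct pages have been collected, the next request is new with probability at least $1 - p[1:k'] \ge q/2$, so the expected phase length is at most $2k'/q$. The ratio $\frac{k'-k}{2k'/q} = \frac{q}{2}\bigl(1 - \frac{k}{k'}\bigr) \ge \frac{q^2}{4-2q}$ is where the denominator $4-2q$ actually comes from --- not from a waiting time for a two-small-page pattern. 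You would need to import this ``collect $k'$ distinct pages, charge $k'-k$ faults'' idea (or an equivalent one in which faults scale with phase length) for the proof to go through.
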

\begin{proof}
Let $k' \le m$ denote the largest integer such that 
\begin{equation*}
    \sum_{j = 1}^{k'} p_j \le 1 - \frac{p[k+1:m]}{2}. 
\end{equation*}
Since for each $j > k$, 
\begin{equation*}
    p_j \le p_k \le \frac{p[1:k]}{k} = \frac{1-p[k+1:m]}{k};
\end{equation*}
we have 
\begin{equation*}
    k' \ge (1 + \frac{p[k+1:m]}{2 \cdot p[1:k]}) \cdot k, \textit{ i.e., } 1 - \frac{k}{k'} \ge 1 - \frac{1}{1 + \frac{p[k+1:m]}{2(1-p[k+1:m])}} = \frac{p[k+1:m]}{2 - p[k+1:m]}.
\end{equation*}
To prove this lemma, the input sequence is partitioned into phases and each phase terminates when $k'$ different pages have been requested during that phase. 
In this way, any paging algorithm (even offline) must face at least $$k'- k \ge \frac{p[k+1:m]}{2(1-p[k+1:m])} \cdot k$$ page faults in each phase, due to the limitation of the cache size $k$. 

On the other hand, to upper bound the expected length of a phase, let $D_t$ denote the (random) set of different pages requested during a sub-sequence of length $t$. 
We observe that either (i) $|D_t|\ge k'$ or (ii) the next page to be requested is not in $D_t$ with probability at least $\frac{p[k+1:m]}{2}$.
Note that when $|D_t| < k'$, the probability that the next page to be requested is not in $D_t$, i.e., $$\Pr\big(|D_{t+1}|=|D_t|+1 \big) = \sum_{j\notin D_t} p_j = 1 - \sum_{j\in D_t} p_j,$$ is at least
\begin{equation*}
    1 - \sum_{j=1}^{|D_t|} p_j \ge  1 - \sum_{j=1}^{k'} p_j \ge \frac{p[k+1:m]}{2}.
\end{equation*} 

Therefore, the expected length to collect a new page in a phase can be upper bounded by $1 / \frac{p[k+1:m]}{2} = \frac{2}{p[k+1:m]}$ and hence, the expected length of the phase (to collect $k'$ different pages) is upper bounded by $k' \cdot \frac{2}{p[k+1:m]}$. 
Since any paging algorithm $\algor$ (even offline) has to pay a cost of $k' - k$ during each phase, we can thus conclude that $\algor$ produces an expected cost rate at least 
\begin{eqnarray}
    \frac{k' - k}{k' \cdot \frac{2}{p[k+1:m]}} &=& \frac{p[k+1:m]}{2} \cdot (1 - \frac{k}{k'}) \ge \frac{p[k+1:m]^2}{4 - 2p[k+1:m]}, \nonumber 
\end{eqnarray}
which concludes the proof of this lemma.
\end{proof}

Thanks to Proposition \ref{propo-roe-costrate}, Lemma \ref{lemma-costrate} and $\costrate(\plfu) = 2\cdot p[k+1:m]$, we immediately have 
\begin{equation*}
\roe(\plfu) \le \frac{2 \cdot p[k+1:m]}{\frac{p[k+1:m]^2}{4 - 2\cdot p[k+1:m]}} = \frac{8}{p[k+1:m]} - 4,
\end{equation*}
which concludes Theorem \ref{theorem-main-2}.

In fact, since any online lazy algorithm $\algor$ satisfies $\costrate(\algor) \le 1$ (recall that a lazy algorithm only evicts a page at a page fault), we have the following corollary.
\begin{corollary}
\label{coro-any-roe}
For any lazy paging algorithm $\algor$, 
\begin{equation*}
    \roe(\algor) \le \frac{4}{p[k+1:m]^2} - \frac{2}{p[k+1:m]}.
\end{equation*}
\end{corollary}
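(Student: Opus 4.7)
The plan is to derive Corollary \ref{coro-any-roe} as an immediate consequence of Proposition \ref{propo-roe-costrate} together with Lemma \ref{lemma-costrate}, so the bulk of the work is already done; only the cost-rate bound for a generic lazy algorithm needs a separate (easy) verification.

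First I would observe that any lazy paging algorithm $\algor$ satisfies $\costrate(\algor) \le 1$. Indeed, by definition $\algor$ incurs no cost whenever the requested page is already in the cache, and at a page fault it evicts exactly one page and loads the requested one, contributing cost exactly $1$. Thus for every request the incurred cost is at most $1$, so $\E_{\sigma}^n[\algor(\sigma)] \le n$ and dividing by $n$ and taking the limit gives $\costrate(\algor) \le 1$.

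Next, Lemma \ref{lemma-costrate} (applied with $\algor = \optim$) yields
\begin{equation*}
\costrate(\optim) \ge \frac{p[k+1:m]^2}{4 - 2\,p[k+1:m]}.
\end{equation*}
Combining this lower bound on $\costrate(\optim)$ with the upper bound $\costrate(\algor) \le 1$ and invoking Proposition \ref{propo-roe-costrate} with $c = 1$ and $c' = \frac{p[k+1:m]^2}{4 - 2\,p[k+1:m]}$ gives
\begin{equation*}
\roe(\algor) \le \frac{c}{c'} = \frac{4 - 2\,p[k+1:m]}{p[k+1:m]^2} = \frac{4}{p[k+1:m]^2} - \frac{2}{p[k+1:m]},
\end{equation*}
which is exactly the claimed bound.

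There is essentially no obstacle here: the statement is a direct corollary of already-established results, and the only new observation is the elementary fact that lazy algorithms pay at most one unit per request. If anything, the mild care required is to make sure the cost model used throughout (cost $1$ per eviction at a fault, rather than cost $2$ as in the non-lazy PLFU accounting) is consistent with how $\costrate(\algor) \le 1$ is stated for lazy algorithms.
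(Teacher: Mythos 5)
Your proof is correct and follows exactly the paper's route: observe that a lazy algorithm pays at most $1$ per request so $\costrate(\algor)\le 1$, then combine with the lower bound on $\costrate(\optim)$ from Lemma \ref{lemma-costrate} via Proposition \ref{propo-roe-costrate}. The final arithmetic simplification matches the stated bound, so there is nothing to add.
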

Note that when $p[k+1:m]$ is large (i.e., there exists a constant $c > 0$ such that $p[k+1:m] > c$), we can conclude from Theorem \ref{theorem-main-2} and Corollary \ref{coro-any-roe} that $\roe(\plfu) = O(1)$ and $\roe(\algor) = O(1)$ for any lazy algorithm $\algor$.

\section{Missing proofs in Section \ref{section-multi-core}}
\label{appendix-multi-cores}
Here we prove that $$\kappa \cdot x \ge 1 - (1 - x)^\kappa \ge \frac{\kappa \cdot x}{1 + \kappa \cdot x} \ge \frac{\kappa \cdot x}{2} \text{ for any } 0 \le x \le 1 / \kappa. $$
Defining $f(x) = \kappa \cdot x$ and $g(x) = 1 - (1 - x)^{\kappa}$, we have $$f(0) = g(0) = 0 \text{ and } f'(x) = \kappa \ge \kappa \cdot (1 - x)^{\kappa - 1} = g'(x) \text{ and } x \le 1 / \kappa,$$ which implies $\kappa \cdot x \ge 1 - (1 - x)^\kappa$.
Next, to prove $1 - (1 - x)^\kappa \ge \frac{\kappa \cdot x}{1 + \kappa \cdot x}$, it is equivalent to prove that $$(1 - x)^{\kappa} \le \frac{1}{1 + \kappa \cdot x}, \text{ i.e., } (1 - x)^\kappa(1 + \kappa \cdot x) \le 1 \text{ when } x \le 1 / \kappa.$$
Defining $h(x) = (1 - x)^\kappa(1 + \kappa \cdot x)$, we have $h(0) = 1$ and $h'(x) = -\kappa \cdot (1 - x)^{\kappa - 1}(\kappa + 1) \cdot x \le 0$, which implies this inequality. 
Finally, since $\frac{1}{1 + \kappa \cdot x} \ge \frac{1}{2}$ when $x \le 1 / \kappa$, we thus have $\frac{\kappa \cdot x}{1 + \kappa \cdot x} \ge \frac{\kappa \cdot x}{2}$. 

\end{document}